\documentclass[letterpaper, 10 pt, journal, twoside]{ieeetran} 
\usepackage{color}
\usepackage{psfrag}
\usepackage{comment}
\usepackage{balance}
\usepackage{hyperref}
\hypersetup{
  colorlinks   = true, 
  urlcolor     = blue, %Colour for external hyperlinks
  linkcolor    = blue, %Colour of internal links
  citecolor   = blue %Colour of citations
}
\usepackage{mathrsfs}
\usepackage{graphicx}
\usepackage{amsmath,amssymb,amsfonts}

\usepackage{amsthm}
\usepackage{mathalfa}
\usepackage{tikz}
\usetikzlibrary{calc}
\usepackage{pgfplots}
\pgfplotsset{compat=newest}
%% the following commands are needed for some matlab2tikz features
\usetikzlibrary{plotmarks}
\usetikzlibrary{arrows.meta}
\usepgfplotslibrary{patchplots}
\usepackage{grffile}
\usepackage{amsmath}
\newcommand*{\QEDB}{\hfill\ensuremath{\square}}%

\newenvironment{remark}
  {\pushQED{\qed}\rem}
  {\popQED\endrem}
\newtheorem{lemma}{Lemma}
\newtheorem{proposition}{Proposition}

\newtheorem{theorem}{Theorem}

\newtheorem{property}{Property}

\newtheorem{definition}{Definition}

\usepackage{graphicx}
\usepackage{amsmath,amssymb,amsfonts,yhmath}
\usepackage{caption}
\usepackage{subcaption}
\usepackage{tikz}
\usetikzlibrary{circuits.ee.IEC}
\usetikzlibrary{arrows}
\DeclareMathOperator*{\argmin}{arg\,min}
\renewcommand{\vec}{\operatorname{vec}}

\newcommand{\id}{\mathbf{id}}
\newcommand{\R}{\mathbb{R}}

\newcommand*{\tr}{^{\mkern-1.5mu\mathsf{T}}}

\newcommand{\dom}{\operatorname{dom}}

\newcommand{\Id}{\mathbf{I}}
\newcommand{\minimize}{\operatorname{minimize}}
\newcommand{\Tr}{\operatorname{trace}}

\newcommand{\SO}{\mathrm{SO}(3)}
\newcommand{\Sone}{\mathrm{SO}(2)}
\newcommand{\A}{\mathcal{A}}

\newcommand{\rk}{{\bf rk}}

\DeclareMathOperator{\rge}{rge}

\newcommand{\Rcal}{\mathcal{R}}
\newcommand{\s}{\vec}

\newcommand{\source}{{THIS IS A PREPRINT VERSION. IF YOU FOUND THIS READING USEFUL FOR YOUR RESEARCH PLEASE CITE THE PUBLISHED VERSION DOI: \href{https://doi.org/10.1109/LCSYS.2021.3090034}{https://doi.org/10.1109/LCSYS.2021.3090034}}}

\usepackage{fancyhdr}
\pagestyle{fancy}
\fancyhf{}

\chead{\source}
\rhead{\thepage}

\usepackage{mathtools} % just for floor
\makeatletter
\def\ps@IEEEtitlepagestyle{}
\title{\LARGE \bf On Angular Speed Estimation of Rigid Bodies\\ (Extended Version)}
\author{Francesco~Ferrante, \IEEEmembership{Member, IEEE} and Gildas~Besan\c{c}on, \IEEEmembership{Member, IEEE}
\thanks{Francesco Ferrante and  Gildas Besan\c{c}on are with Univ. Grenoble Alpes, CNRS, Grenoble INP, GIPSA-lab, 38000 Grenoble, France. Email: \{francesco.ferrante, gildas.besancon\}@gipsa-lab.fr.}
\thanks{\textcolor{blue}{This file contain corrections to some minor oversights in the published version and adds a clarification in the proof of Theorem 1. Corrections are colored in blue and there are footnotes explaining them. \textbf{Corrections are minor and the results in the printed version of the paper are not affected}. Last update \today.}}
\thanks{\textcolor{red}{In the printed version of our paper, some occurrences of the symbol $\vec$ got lost due to typesetting issues. An erratum has been issued; see \href{https://doi.org/10.1109/LCSYS.2021.3129614}{https://doi.org/10.1109/LCSYS.2021.3129614}}.}
}
\begin{document}
\maketitle
\begin{abstract}            
The problem of estimating the angular speed of a solid body from attitude measurements is addressed. To solve this problem, we propose an observer whose dynamics are not constrained to evolve on any specific manifold. This drastically simplifies the analysis of the proposed observer. Using Lyapunov tools, sufficient conditions for global asymptotic stability of a set wherein the estimation error is equal to zero are established. In addition, the proposed methodology is adapted to deal with angular speed estimation for systems evolving on the unit circle. The approach is illustrated through several numerical simulations.  
\end{abstract}
\begin{IEEEkeywords}
Observer design, mechanical systems, angular speed estimation. 
\end{IEEEkeywords}
\section{Introduction}
\subsection{Background}
\IEEEPARstart{H}{andling} rigid bodies, as  in aerial or underwater vehicle applications, has been a source of control challenges for several decades now.  In particular, the problem of estimating the state of a system evolving on the special group $\SO$ has attracted the interest of researchers; see, e.g., \cite{haydar2017h,mahony2008nonlinear} and the references therein. It is worth noting that although the problem of attitude filtering has received a lot of attention, only a few contributions addressing 
angular speed estimation can be found. In \cite{salcudean1991globally} for instance, an observer based on quaternion description and discontinuous correction is proposed, and in \cite{wu2015angular}, an observer is designed directly on $\SO$, providing almost global asymptotic stability. In \cite{berkane2016global}, another design on $\SO$ can be found, providing global convergence. The strategy therein relies on a hybrid mechanism reminiscent of the results in \cite{mayhew2013synergistic}. Recently, angular speed estimators, yet for system spinning around a fixed axis have been presented in \cite{bernard2020estimation,brentari2018class}.
%State estimation...\textcolor{red}{Ecrire deux lignes sur l'estimation en SO3, citer \cite{salcudean1991globally,berkane2016global,wu2015angular}}
\subsection{Contributions and outline of the paper}
The main contribution of this paper pertains to a novel observer for angular speed estimation of rigid bodies from attitude measurements on $\SO$ expressed in an inertial frame. The key feature of the solution we propose consists of avoiding the common paradigm requiring to confine angular speed observers on the manifold $\SO$. This enables to considerably simplify the structure and the convergence analysis of the observer, which does not require the use of sophisticated geometrical tools. The leading idea of our work is that estimating the angular speed does not require to estimate the attitude. This makes our approach unique when compared to recent results \cite{berkane2016global,wu2015angular}. The price to pay to obtain such a simplification is that our solution does not enable to obtain an estimate of the attitude \cite{wu2015angular}.

The proposed methodology is specialized to the case of a body rotating around a fixed axis. This leads to an angular speed observer that can be exploited in the presence of ``wrapped'' angular measurements, which is a case of practical relevance. As an additional contribution, we show that in this setting, the observer can be coupled to a specific projection map and be also used for angular position filtering. 

The remainder of the paper is organized as follows. Section~\ref{sec:ProbStatSO3} presents the problem we solve and the outline of the proposed observer. Section~\ref{sec:Stability} provides sufficient conditions to ensure global asymptotic stability of the estimation error dynamics and characterizes robustness to small measurement noise. Section~\ref{sec:SO2} specializes the results presented in the previous sections to the specific, yet practically relevant, case of a body spinning around a fixed axis, and presents a thorough discussion about angular position estimation. Section~\ref{sec:Ex} illustrates the proposed methodology in two numerical examples. 
\subsection{Notation and Preliminaries}
The symbol $\R_{\geq 0}$ ($\R_{>0}$) denotes the set of nonnegative (positive) reals, $\R^n$ is the Euclidean space of dimension $n$, $\R^{n\times m}$ is the vector space of $n\times m$ real matrices represented by the canonical basis, $\mathbb{S}_+^n$ is the set of $n\times n$ symmetric positive definite matrices, and $\id$ is the identity function. Given two topological spaces $X$ and $Y$, $\mathcal{C}^0(X;Y)$ is the set of continuous functions from $X$ to $Y$. For a matrix $M\in\mathbb{R}^{n\times m}$ (vector $x\in\R^n$), $M\tr$ ($x\tr$) denotes the transpose of $M$ (of $x$). For a symmetric matrix $M$, $M\succ 0$ denotes positive definiteness of $M$. Given a metric space $\mathcal{M}$, the symbol $\mathbb{B}$ denotes the closed unit ball in $\mathcal{M}$. For Euclidean vectors, we use the equivalent notation $(x, y)=[x\tr\,\,y\tr]\tr$. Let $\mathbb{U}$ be a normed linear vector space, $\vert x\vert$ stands for the norm of $x$. Let $x\in\mathbb{U}$ and $\mathcal{A}\subset \mathbb{U}$ be nonempty, the distance of $x$ to $\mathcal{A}$ is defined as $d(x,\mathcal{A})\coloneqq\displaystyle\inf_{y\in {\mathcal{A}}} \vert x-y \vert$.
Let $A\in\R^{n\times n}$ and $B\in\R^{n\times n}$, $\langle A, B\rangle_{\mathcal{F}}\coloneqq\Tr(A\tr B)$ denotes the Frobenius inner product, and $\vert A\vert_{\mathcal{F}}=\sqrt{\langle A, A\rangle_{\mathcal{F}}}$ the corresponding induced norm. The symbol $\mathfrak{so}(3)$ stands for the set of $3\times 3$ skew matrices and for any positive integer $n$, $\mathrm{SO}(n)\coloneqq\{R\in\R^{n\times n}\colon R\tr R=\Id, \det(R)=1\}$, where $\Id$ denotes the identity matrix. Given $x\in\R^3$, we define $[x]_\times\coloneqq\left[\begin{smallmatrix}
0 & -x_3& x_2\\ x_3 & 0 & -x_1\\ -x_2 & x_1 & 0
\end{smallmatrix}\right].$ 
Notice that $x\mapsto [x]_\times$ is invertible on $\mathfrak{so}(3)$. 
In particular, we denote the inverse of $x\mapsto [x]_\times$ by $\vec\colon\mathfrak{so}(3)\rightarrow\R^3$, defined as $\vec([x]_\times)=x$. Let $\mathcal{U}$ be a finite dimensional  real inner product space with dimension $n$, $V\colon\dom V\subset\mathcal{U}\rightarrow\R$, $x\in\mathcal{U}$, and $\{e_i\}_{i=1}^n$ be an orthonormal basis of $\mathcal{U}$, we use the notation
$\nabla V(x)\coloneqq \sum_{i=1}^n (DV(x)e_i)e_i$, where $DV(x)\colon\mathcal{U}\rightarrow\R$ stands for the differential of $V$ at $x$. For $f\colon X\rightarrow Y$, $\rge f$ stands for the image of $f$. The symbol $V^{-1}(c)$ stands for the $c$-level set of the function $V\colon\dom V\rightarrow \R$.

Next we state two useful properties. The first can be established by inspection.
\begin{property}
Let $\Theta\in\R^{3\times 3}$ and $\omega\in\R^{3}$. Then, 
$$\Tr\left([\omega]_\times\Theta\right)=\omega\tr \s(\Theta\tr-\Theta)$$
\label{prop:TraceProp}
\hfill$\diamond$
\end{property}
\begin{property}
Let $\Theta\in\R^{3\times 3}$ and $\omega\in\R^{3}$. Assume that $\Theta$ is nonsingular.  Then,  
$[\omega]_\times\Theta=0\iff \omega=0$.
\label{prop:Ker}
\end{property}
\begin{proof} 
The implication $\omega=0\implies [\omega]_\times\Theta=0 $
is trivial. To conclude the proof, observe that for all $\omega\in\R^3\setminus\{0\}$, $\rk[\omega]_\times=2$.
Therefore, for all $\omega\in\R^3\setminus\{0\}$, $\dim\ker[\omega]_\times=1$. This implies that for any set of linear independent vectors $\mathcal{V}\coloneqq\{v_1, v_2, v_3\}$, one has $\mathcal{V}\not\subset\ker[\omega]_\times$. This concludes the proof. 
\end{proof}
\subsection{Preliminaries on Constrained Differential Inclusions}
\label{sec:prel}
In this paper we consider dynamical systems of the form:
\begin{equation}
\dot{x}\in F(x),\qquad x\in C,
\label{eq:diffInc}
\end{equation}
where $x\in\mathcal{U}$ is the system state, $\mathcal{U}$ is a normed finite dimensional linear vector space, $C\subset\mathcal{U}$, and the set valued map $F\colon \dom F\rightrightarrows\mathcal{U}$. In particular, we say that \eqref{eq:diffInc} satisfies the so-called \emph{basic conditions} if: $C$ is closed in $\mathcal{U}$, $F$ is outer semicontinuous and locally bounded on $C$, $C\subset\dom F$, and for all $x\in C$, $F(x)$ is convex. A function $\phi\colon \dom\phi\rightarrow\mathcal{U}$ is a solution to \eqref{eq:diffInc} if $\dom\phi$ is an interval of $\R_{\geq 0}$ including zero, $\phi(0)\in C$, $\phi$ is locally absolutely continuous, and for almost all $t\in\dom\phi$, $\phi(t)\in C$ and $\dot{\phi}(t)\in F(\phi(t))$. A solution to \eqref{eq:diffInc} is said to be maximal if its domain cannot be extended and complete if its domain is unbounded. The following stability notion for compact sets is considered.
\begin{definition}
Let $\A\subset \mathcal{U}$ be compact. We say that $\A$ is:
\begin{itemize}
    \item[($i$)] stable for \eqref{eq:diffInc} if for any $\varepsilon>0$, there exists $\delta>0$ such that any solution $\phi$ to \eqref{eq:diffInc}, with $d(\phi(0), \A)\leq \delta$ satisfies for all $t\in\dom\phi$, $d(\phi(t), \A)\leq \varepsilon$;
    \item[($ii$)] globally pre-attractive for \eqref{eq:diffInc} if for any $\mu>0$ every solution $\phi$ to \eqref{eq:diffInc} with $d(\phi(0),\A)\leq \mu$ is bounded and, if $\phi$ is complete, $
    \displaystyle\lim_{t\rightarrow\infty}d(\phi(t), \A)=0
    $;
      \item[($iii$)] globally pre-asymptotically stable (GpAS) for \eqref{eq:diffInc} if it is both stable and globally pre-attractive.
\end{itemize}
\end{definition}
The following definition is used in the paper.
\begin{definition}
Given a nonempty set $\A$ and a function $V\,:\,\dom V\rightarrow\R$, with $\dom V\supset\A$, we say that $V$ is positive definite with respect to $\A$ if for all $x\in(\dom V\setminus\A)$, $V(x)>0$ and $V(\A)=\{0\}$.
\end{definition}
See \cite{goebel2012hybrid} for more details on these assumptions and definitions for the more general case of hybrid dynamical systems. 
\section{Problem Statement and Solution Outline}
\label{sec:ProbStatSO3}
We consider the attitude dynamics of a rigid body in $\R^3$ actuated via a torque $u\in\R^3$, that is: 
\begin{equation}
\label{eq:plant}
\begin{aligned}
&\dot{\Rcal}=[\Rcal J_0^{-1}\Rcal\tr q]_\times\Rcal\\
&\dot{q}=u\\
&\omega=\Rcal J_0^{-1}\Rcal\tr q,
\end{aligned}
\end{equation}
where $\Rcal\in \SO$, $q\in\R^3$, and $\omega\in\R^3$ represent the body attitude, angular momentum, and angular speed (all) expressed in an inertial reference frame, respectively; see, e.g., \cite{wu2015angular}. The matrix $J_0\in\mathbb{S}_+^{3\times 3}$ represents the body inertia matrix expressed in a
body-fixed frame. We assume that the input torque is selected such that the angular momentum $q$ belongs to some compact set $\mathcal{K}_q\subset\R^3$, i.e., the angular speed is bounded. 
Assuming that a measurement of $\Rcal$ is accessible, our goal is to design an observer providing an estimate $\hat{\omega}$ of the angular speed $\omega$. To solve this problem, we propose the following observer with state\footnote
{\textcolor{blue}{$\Rcal$ and $\Rcal\tr$ are swapped in \eqref{eq:obs}}.} $(\widehat{\Rcal}, \hat{q})\in\R^{3\times 3}\times \R^3$:
\begin{equation}
\label{eq:obs}
\begin{aligned}
&\dot{\widehat{\Rcal}}=[\Rcal J_0^{-1}\Rcal\tr\hat{q}]_\times\Rcal+\Gamma(\widetilde{\Rcal})\\
&\dot{\hat{q}}=u+K\Rcal J_0^{-1}\Rcal\tr\s(\widetilde{\Rcal}\Rcal\tr-\Rcal\widetilde{\Rcal}\tr)
\\
&\hat{\omega}=\textcolor{blue}{\Rcal J_0^{-1}\Rcal\tr}\hat{q},
\end{aligned}
\end{equation}
where 
$\widetilde{\Rcal}\coloneqq \Rcal-\widehat{\Rcal}$, while $K\in\R^{3\times 3}$ and $\Gamma\in\mathcal{C}^0(\R^{3\times 3};\R^{3\times 3})$ are to be designed.

The state $\widehat{\Rcal}$ represents a sort of estimate of the attitude $\Rcal$, while $\hat{q}$ represents an estimate of the angular momentum. In this sense, the observer we propose is a full-order observer. On the other hand, it is worthwhile to remark that the state $\widehat{\Rcal}$ is not constrained to belong to $\SO$, i.e., $\widehat{\Rcal}$ does not represent a meaningful attitude estimate.
In other words, the proposed angular speed observer does not comply with the geometry of the special group $\SO$, yet, as shown in Section~\ref{sec:Stability}, it generates an asymptotically converging estimate of the angular speed $\omega$. This enables to dramatically simplify the analysis of the observer.

To analyze the considered state estimation problem, let us define the following estimation error for the angular momentum $\tilde{q}\coloneqq q-\hat{q}$. Moreover, to simplify the analysis, inspired by \cite{li2015finite,mayhew2013synergistic}, we assume the input torque $u$ is generated by the following exosystem:
\begin{equation}
\label{eq:exo}
\dot{u}\in M\mathbb{B},\qquad u\in\mathcal{K}_u,   
\end{equation}
where $M>0$ and $\mathcal{K}_u\subset\R^3$ is compact. Notice that solutions to \eqref{eq:exo} are bounded but not necessarily differentiable, which allows \eqref{eq:exo} to capture a large class of inputs.

By taking as a state $x\coloneqq (\Rcal, q, \widetilde{\Rcal},\tilde{q}, u)\in\mathcal{U}\coloneqq\R^{3\times 3}\times\R^3\times\R^{3\times 3}\times\R^{3}\times\R^{3}$, endowed with the norm 
$$\vert x\vert\coloneqq \sqrt{\langle \Rcal, \Rcal\rangle_{\mathcal{F}}+q\tr q+\langle \widetilde{\Rcal},\widetilde{\Rcal}\rangle_{\mathcal{F}}
+\tilde{q}\tr\tilde{q}+u\tr u}\,,
$$
%with $\mathcal{U}$ 
%endowed with the following inner product
%\begin{equation}
%\label{eq:norm}
%\begin{aligned}
%\langle x_1, x_2\rangle_{\mathcal{U}}=&\langle \Rcal_1,\Rcal_2\rangle_{\mathcal{F}}+q_1\tr q_2+u_1\tr u_2\\
%&+ \langle \widetilde{\Rcal}_1,\widetilde{\Rcal}_2\rangle_{\mathcal{F}}+\tilde{q}_1\tr\tilde{q}_2\qquad\forall x_1, x_2\in\mathcal{U}
%\end{aligned}
%\end{equation}
the interconnection of plant \eqref{eq:plant} and observer \eqref{eq:obs} can be represented as the following constrained differential inclusion 
\begin{equation}
\dot{x}\in F(x),\quad x\in\mathcal{X}\,,
\label{eq:Closed-loop}
\end{equation}
where $\mathcal{X}\coloneqq \SO\times\mathcal{K}_q\times \R^{3\times 3}\times\R^3\times \mathcal{K}_u\subset\mathcal{U}$ and
\begin{equation}
F(x)\coloneqq 
\begin{bmatrix}
&[\Rcal J_0^{-1}\Rcal\tr q]_\times\Rcal\\
&u\\
&[\Rcal J_0^{-1}\Rcal\tr\tilde{q}]_\times\Rcal-\Gamma(\widetilde{\Rcal})\\
&-K\Rcal J_0^{-1}\Rcal\tr\s(\widetilde{\Rcal}\Rcal\tr-\Rcal\widetilde{\Rcal}\tr)\\
&M\mathbb{B}\end{bmatrix},\\
\quad\forall x\in\mathcal{X}.
\label{eq:Fdyn}
\end{equation}

\begin{remark}
Maximal solutions to \eqref{eq:Closed-loop} are not necessarily complete. On the other hand, any bounded solution to \eqref{eq:plant}-\eqref{eq:obs}-\eqref{eq:exo}
can be captured by \eqref{eq:Closed-loop} by taking the compact set $\mathcal{K}_q$ large enough.  
\end{remark}

The following result can be established for \eqref{eq:Closed-loop}.
\begin{lemma}
\label{lemma:WellPosed}
The set $\mathcal{X}$ is closed in $\mathcal{U}$, $F$ is locally bounded and outer semicontinuous relative to $\mathcal{X}$, and for all $x\in\mathcal{X}$, $F(x)$ is convex. Namely, \eqref{eq:Closed-loop} satisfies the basic conditions in Section~\ref{sec:prel}.
\end{lemma}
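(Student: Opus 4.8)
The plan is to verify, one at a time, the four ingredients of the \emph{basic conditions} recalled in Section~\ref{sec:prel}: closedness of $\mathcal{X}$, local boundedness of $F$ relative to $\mathcal{X}$, outer semicontinuity of $F$ relative to $\mathcal{X}$, and convexity of $F(x)$ for every $x\in\mathcal{X}$. The structural remark that trivializes the argument is that $F$ in \eqref{eq:Fdyn} factors as $F(x)=\{g(x)\}\times M\mathbb{B}$, where $g$ collects the first four (single-valued) blocks of \eqref{eq:Fdyn} and $M\mathbb{B}$ is a fixed compact convex set.

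First I would show that $\mathcal{X}$ is closed in $\mathcal{U}$. The set $\SO=\{R\colon R\tr R=\Id\}\cap\{R\colon\det R=1\}$ is an intersection of preimages of closed sets under continuous maps, hence closed; $\mathcal{K}_q$ and $\mathcal{K}_u$ are compact by assumption, hence closed; and $\R^{3\times3}$, $\R^3$ are closed. A finite Cartesian product of closed sets is closed, so $\mathcal{X}=\SO\times\mathcal{K}_q\times\R^{3\times3}\times\R^3\times\mathcal{K}_u$ is closed in $\mathcal{U}$; moreover $\mathcal{X}\subset\dom F$ by construction.

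Next I would establish that $g$ is continuous on $\mathcal{X}$: each of its entries is obtained from the entries of $\Rcal,q,\widetilde{\Rcal},\tilde q,u$ by products, by the map $\s$ (linear on skew matrices), and by the map $\Gamma$, which is continuous by hypothesis. Continuity of $g$ yields local boundedness of $g$, and the remaining block $M\mathbb{B}$ is bounded, so $F$ is locally bounded relative to $\mathcal{X}$; in fact, on $\mathcal{X}$ the variables $\Rcal$, $q$, $u$ live in compact sets, so only $\widetilde{\Rcal}$ and $\tilde q$ are unconstrained. Convexity of $F(x)$ is immediate, since $\{g(x)\}$ is a singleton and $M\mathbb{B}$ is a ball, and a product of convex sets is convex.

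It remains to check outer semicontinuity of $F$ relative to $\mathcal{X}$, for which I would use the sequential (closed-graph) characterization. Let $x_i\in\mathcal{X}$ with $x_i\to x\in\mathcal{X}$, and $y_i\in F(x_i)$ with $y_i\to y$; writing $y_i=(g(x_i),w_i)$ with $w_i\in M\mathbb{B}$, continuity of $g$ gives $g(x_i)\to g(x)$, while closedness of $M\mathbb{B}$ gives $\lim_i w_i\in M\mathbb{B}$, whence $y\in\{g(x)\}\times M\mathbb{B}=F(x)$. This closes all four points. I do not expect a genuine obstacle here; the only subtlety worth flagging is that $F$ inherits merely \emph{continuity} (not Lipschitz continuity or smoothness) from the designer-chosen term $\Gamma$ — which is precisely what the basic conditions demand, although it is not enough to guarantee uniqueness of solutions to \eqref{eq:Closed-loop}.
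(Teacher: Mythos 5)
Your proposal is correct and follows essentially the same route as the paper: closedness of $\mathcal{X}$ from compactness/closedness of the factors, local boundedness and convexity from the decomposition $F(x)=\{g(x)\}\times M\mathbb{B}$, and outer semicontinuity via the closed-graph characterization (the paper cites \cite[Lemma 5.10]{goebel2012hybrid} where you verify the sequential condition directly, which amounts to the same thing). No gaps.
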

\begin{proof}
Closednees of $\mathcal{X}$ follows directly from $\SO$, $\mathcal{K}_q$, and $\mathcal{K}_u$ being compact. Local boundedness of $F$ relative to $\mathcal{X}$ follows from the continuity of the first four entries of $F$ and the compactness of the last entry, i.e., $M\mathbb{B}$. The same continuity property, along with the closedness of $M\mathbb{B}$ imply that the graph of $F$ is relatively closed in $\mathcal{X}\times \mathcal{U}$, hence from \cite[Lemma 5.10]{goebel2012hybrid} $F$ is outer semicontinuous relative to $\mathcal{X}$. To conclude, notice that for all $x\in\mathcal{X}$, $F(x)=\{f_1\}\times \{f_2\}\times \{f_3\}\times \{f_4\}\times M\mathbb{B}$ for some $f_i$'s. This shows that $F(x)$ is convex for all $x\in\mathcal{X}$. 
\end{proof}

To solve the state estimation problem, we introduce the following compact attractor
\begin{equation}
\label{eq:calA}
\A\coloneqq\{x\in\mathcal{X}\colon\widetilde{\Rcal}=0, \tilde{q}=0\}\subset\mathcal{X},
\end{equation}
and design the observer gains $\Gamma$ and $K$ to ensure that $\A$ is globally pre-asymptotically stable for \eqref{eq:Closed-loop}, thereby ensuring that $\hat{\omega}$ in \eqref{eq:obs} asymptotically approaches $\omega$ (along with complete solutions). In particular, notice that for all $x\in\mathcal{X}$, one has
$d(x,\A)=\sqrt{\vert\widetilde{\Rcal}\vert^2_\mathcal{F}+ \tilde{q}\tr\tilde{q}}$.
\section{Analysis of the Observer}
\label{sec:Stability}
\subsection{Nominal case}
Before stating the main result of this section, we recall the following obvious result that applies to \eqref{eq:plant}.
\begin{proposition}
\label{prop:Precomp}
Let $\phi$ be any complete solution to \eqref{eq:plant} and let $(\phi_{\Rcal}, \phi_q)$ be, respectively, the $\Rcal$ and $q$ components of $\phi$. Then, $(\phi_{\Rcal}, \phi_q)$ is bounded.
\end{proposition}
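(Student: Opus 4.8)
The plan is to bound the two components of $\phi$ separately, exploiting the triangular (cascade) structure of \eqref{eq:plant}. For the angular‑momentum component there is essentially nothing to prove: the standing assumption stated just after \eqref{eq:plant} is precisely that the torque is selected so that $q$ evolves in the compact set $\mathcal{K}_q$, so $\phi_{q}(t)\in\mathcal{K}_q$ for every $t$ in the domain of $\phi$, and $\phi_{q}$ is bounded by compactness of $\mathcal{K}_q$.

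For the attitude component I would show that $t\mapsto\langle\phi_{\Rcal}(t),\phi_{\Rcal}(t)\rangle_{\mathcal{F}}$ is constant along solutions. Set $\omega(t)\coloneqq\phi_{\Rcal}(t)J_0^{-1}\phi_{\Rcal}(t)\tr\phi_{q}(t)$, so that the first line of \eqref{eq:plant} reads $\dot\phi_{\Rcal}=[\omega]_\times\phi_{\Rcal}$ with $[\omega]_\times\in\mathfrak{so}(3)$. Then, using the chain rule for the Frobenius inner product and the cyclic property of the trace,
$$\ddt\langle\phi_{\Rcal},\phi_{\Rcal}\rangle_{\mathcal{F}}=2\langle\phi_{\Rcal},[\omega]_\times\phi_{\Rcal}\rangle_{\mathcal{F}}=2\Tr\bigl(\phi_{\Rcal}\tr[\omega]_\times\phi_{\Rcal}\bigr)=2\Tr\bigl([\omega]_\times\phi_{\Rcal}\phi_{\Rcal}\tr\bigr),$$
and since $\phi_{\Rcal}\phi_{\Rcal}\tr$ is symmetric, Property~\ref{prop:TraceProp} shows that this quantity vanishes (equivalently, $\phi_{\Rcal}\tr[\omega]_\times\phi_{\Rcal}$ is skew‑symmetric, hence trace‑free). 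Consequently $\vert\phi_{\Rcal}(t)\vert_{\mathcal{F}}=\vert\phi_{\Rcal}(0)\vert_{\mathcal{F}}=\sqrt{3}$ for all $t$, because $\phi_{\Rcal}(0)\in\SO$; in particular $\phi_{\Rcal}$ is bounded.

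Combining the two bounds yields boundedness of $(\phi_{\Rcal},\phi_{q})$, which is the claim. The only point that requires a moment's thought — and the reason the statement is merely ``obvious'' rather than immediate — is the invariance under the attitude dynamics of the Frobenius sphere of radius $\sqrt3$ (indeed of $\SO$ itself, which could alternatively be seen via $\ddt(\phi_{\Rcal}\tr\phi_{\Rcal})=0$ together with continuity of $\det$); the computation above settles exactly the part of this that is needed, so I do not expect any real obstacle.
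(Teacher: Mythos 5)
Your proof is correct, but it takes a more computational route than the paper's. The paper disposes of the proposition in one line: a solution to \eqref{eq:plant} has, by definition of the constrained system (cf.\ the constraint set $\mathcal{X}$ in \eqref{eq:Closed-loop}), its $(\Rcal,q)$-components confined to $\SO\times\mathcal{K}_q$, which is compact, so boundedness is immediate. You instead treat the attitude equation as an unconstrained ODE and \emph{derive} the needed bound by showing that $t\mapsto\vert\phi_{\Rcal}(t)\vert_{\mathcal{F}}$ is conserved, via the skew-symmetry of $[\omega]_\times$ (your computation is correct: $\phi_{\Rcal}\tr[\omega]_\times\phi_{\Rcal}$ is skew, hence trace-free, or equivalently Property~\ref{prop:TraceProp} applied to the symmetric matrix $\phi_{\Rcal}\phi_{\Rcal}\tr$ gives zero). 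What your version buys is a self-contained verification that the Frobenius sphere of radius $\sqrt{3}$ (indeed $\SO$ itself) is forward invariant for the attitude dynamics, rather than taking $\rge\phi_{\Rcal}\subset\SO$ as part of the solution concept; what it costs is that the ``obvious'' proposition acquires a half-page of computation that the constrained-inclusion framework was set up precisely to avoid. Both arguments are valid; yours is the one you would need if \eqref{eq:plant} were posed on all of $\R^{3\times3}\times\R^3$ with initial condition on $\SO$.
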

\begin{proof}
The proof simply follows from the fact that $\rge(\phi_{\Rcal}, \phi_q)\subset\SO\times\mathcal{K}_q$, which is a compact set.
\end{proof}
In the result given next, we provide sufficient conditions on the nonlinear gain $\Gamma$ and on the gain $K$ to ensure global pre-asymptotic stability of the set $\A$ for \eqref{eq:Closed-loop}. The result relies on the use of an invariance principle, whose application is enabled by the fact that \eqref{eq:Closed-loop} satisfies the basic conditions (see Lemma~\ref{lemma:WellPosed}); see \cite{sanfelice2007invariance} for more details on invariance principles for constrained differential inclusions.   
\begin{theorem}
\label{the:Main}
Let $\Gamma$ be positive definite, i.e., for all $R\in\R^{3\times 3}\setminus\{0\}$, $\langle \Gamma(R), R\rangle_\mathcal{F}>0$, $\Gamma(0)=0$, and $K\in\mathbb{S}_+^3$. 
Then, the set $\A$ defined in \eqref{eq:calA} is GpAS for \eqref{eq:Closed-loop}.
\end{theorem}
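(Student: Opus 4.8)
The plan is to exhibit a smooth Lyapunov function for $\mathcal{A}$, show it is nonincreasing along every solution with a decrease rate that vanishes only where $\widetilde{\Rcal}=0$, deduce stability and uniform boundedness from sublevel-set estimates, and finally invoke the invariance principle for constrained differential inclusions to upgrade nonincrease into convergence to $\mathcal{A}$.

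Concretely, I would take $V\colon\mathcal{U}\to\R$, $V(x)\coloneqq\tfrac12\langle\widetilde{\Rcal},\widetilde{\Rcal}\rangle_{\mathcal{F}}+\tfrac12\tilde q\tr K^{-1}\tilde q$, which is well defined and smooth since $K\in\mathbb{S}_+^3$, and satisfies $\underline c\,d(x,\mathcal{A})^2\le V(x)\le\overline c\,d(x,\mathcal{A})^2$ on $\mathcal{X}$ with $\underline c\coloneqq\tfrac12\min\{1,\lambda_{\min}(K^{-1})\}$, $\overline c\coloneqq\tfrac12\max\{1,\lambda_{\max}(K^{-1})\}$ (recall $d(x,\mathcal{A})^2=\vert\widetilde{\Rcal}\vert_{\mathcal{F}}^2+\tilde q\tr\tilde q$); in particular $V$ is positive definite with respect to $\mathcal{A}$. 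Weighting the $\tilde q$-term by $K^{-1}$ rather than by the identity is the crucial design step: it cancels the gain $K$ appearing in the $\tilde q$-dynamics. Computing $\langle\nabla V(x),f\rangle$ for $f\in F(x)$ (the value is independent of the $M\mathbb{B}$ component, since $V$ and the $\widetilde{\Rcal},\tilde q$ dynamics do not involve $u$) yields $\langle\widetilde{\Rcal},[\Rcal J_0^{-1}\Rcal\tr\tilde q]_\times\Rcal\rangle_{\mathcal{F}}-\langle\widetilde{\Rcal},\Gamma(\widetilde{\Rcal})\rangle_{\mathcal{F}}-\tilde q\tr\Rcal J_0^{-1}\Rcal\tr\s(\widetilde{\Rcal}\Rcal\tr-\Rcal\widetilde{\Rcal}\tr)$. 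Rewriting the first term via the cyclic property of the trace as $\Tr\left([\Rcal J_0^{-1}\Rcal\tr\tilde q]_\times\Rcal\widetilde{\Rcal}\tr\right)$ and applying Property~\ref{prop:TraceProp} with $\omega=\Rcal J_0^{-1}\Rcal\tr\tilde q$ and $\Theta=\Rcal\widetilde{\Rcal}\tr$ (using $J_0=J_0\tr$) turns it into $\tilde q\tr\Rcal J_0^{-1}\Rcal\tr\s(\widetilde{\Rcal}\Rcal\tr-\Rcal\widetilde{\Rcal}\tr)$, which exactly cancels the third term. Hence $\langle\nabla V(x),f\rangle=-\langle\widetilde{\Rcal},\Gamma(\widetilde{\Rcal})\rangle_{\mathcal{F}}\le0$ for all $x\in\mathcal{X}$, $f\in F(x)$, by positive definiteness of $\Gamma$.

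Stability and boundedness then follow by routine sublevel-set arguments from the quadratic sandwich on $V$: given $\varepsilon>0$, choosing $\delta>0$ with $\overline c\,\delta^2\le\underline c\,\varepsilon^2$ gives $d(\phi(0),\mathcal{A})\le\delta\Rightarrow V(\phi(t))\le V(\phi(0))\le\overline c\,\delta^2\le\underline c\,\varepsilon^2\Rightarrow d(\phi(t),\mathcal{A})\le\varepsilon$ on $\dom\phi$; the same estimate with $\mu$ in place of $\delta$ bounds $d(\phi(t),\mathcal{A})$, and since the $(\Rcal,q,u)$-components of any solution stay in the compact set $\SO\times\mathcal{K}_q\times\mathcal{K}_u$, every solution with $d(\phi(0),\mathcal{A})\le\mu$ is bounded. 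For pre-attractivity I would apply the invariance principle for constrained differential inclusions \cite{sanfelice2007invariance} — legitimate because \eqref{eq:Closed-loop} satisfies the basic conditions by Lemma~\ref{lemma:WellPosed} and $V$ is smooth: any bounded complete solution converges to the largest weakly invariant subset of $\mathcal{X}\cap\{x:\langle\widetilde{\Rcal},\Gamma(\widetilde{\Rcal})\rangle_{\mathcal{F}}=0\}\cap V^{-1}(r)$ for some $r\ge0$. Since $\Gamma$ is positive definite with $\Gamma(0)=0$, the middle set is $\{x\in\mathcal{X}:\widetilde{\Rcal}=0\}$; if a solution stays there then $\widetilde{\Rcal}\equiv0$, hence $\dot{\widetilde{\Rcal}}\equiv0$, so from \eqref{eq:Fdyn} $[\Rcal J_0^{-1}\Rcal\tr\tilde q]_\times\Rcal=0$, and since $\Rcal\in\SO$ and $\Rcal J_0^{-1}\Rcal\tr$ are nonsingular, Property~\ref{prop:Ker} forces $\tilde q\equiv0$. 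Thus that largest weakly invariant set is contained in $\mathcal{A}$ (on which $V\equiv0$, so $r=0$), giving convergence of every bounded complete solution to $\mathcal{A}$; together with the boundedness already shown this is global pre-attractivity, and combined with stability, GpAS.

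The genuinely delicate points are, first, the algebraic cancellation in the derivative of $V$ — one must track transposes carefully, use symmetry of $J_0$ (hence $J_0^{-1}$), and notice that only the $K^{-1}$-weighting makes the cross terms vanish without imposing anything on $K$ beyond $K\in\mathbb{S}_+^3$ — and second, the identification of the largest invariant set inside $\{\widetilde{\Rcal}=0\}$, where $\widetilde{\Rcal}\equiv0$ has to be fed back into the plant/observer dynamics and Property~\ref{prop:Ker} invoked to kill $\tilde q$. The remaining steps — the quadratic bounds, the comparison argument, and checking the hypotheses of the invariance principle — are routine.
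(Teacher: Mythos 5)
Your proposal is correct and follows essentially the same route as the paper: the same Lyapunov function $V(x)=\tfrac12\langle\widetilde{\Rcal},\widetilde{\Rcal}\rangle_{\mathcal{F}}+\tfrac12\tilde q\tr K^{-1}\tilde q$, the same cancellation via the cyclic trace property and Property~\ref{prop:TraceProp} yielding $\dot V=-\langle\widetilde{\Rcal},\Gamma(\widetilde{\Rcal})\rangle_{\mathcal{F}}$, and the same invariance-principle step in which $\widetilde{\Rcal}\equiv0$ together with $\Gamma(0)=0$ and Property~\ref{prop:Ker} forces $\tilde q\equiv0$. The only cosmetic differences are that you prove stability by hand with sublevel-set estimates where the paper cites \cite[Theorem 8.8]{goebel2012hybrid}, and you characterize the largest weakly invariant set directly rather than by contradiction for each $r^\star>0$.
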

\begin{proof}
Consider the following Lyapunov candidate:
\begin{equation}
V(x)\coloneqq\frac{1}{2}\langle\widetilde{\Rcal}, \widetilde{\Rcal}\rangle_{\mathcal{F}}+\frac{1}{2}\tilde{q}\tr K^{-1}\tilde{q}\qquad \forall x\in\mathcal{U}.
\label{eq:V}    
\end{equation}
Notice that $V$ is continuously differentiable on $\mathcal{U}$ and positive definite with respect to the set $\A$ defined in \eqref{eq:calA} on $\mathcal{X}$.
With a slight abuse of notation, let for all $x\in\mathcal{X}$, $\dot{V}(x)\coloneqq\langle \nabla V(\widetilde{\Rcal}, \tilde{q}), h(x)\rangle$,
where
$$
h(x)\coloneqq\begin{bmatrix}
[\Rcal J_0^{-1}\Rcal\tr\tilde{q}]_\times\Rcal-\Gamma(\widetilde{\Rcal})\\
-K\Rcal J_0^{-1}\Rcal\tr\vec(\widetilde{\Rcal}\Rcal\tr-\Rcal\widetilde{\Rcal}\tr)
\end{bmatrix},
$$
and for all $(R_1, q_1), (R_2, q_2)\in\R^{3\times 3}\times\R^{3}$,
$\langle (R_1, q_1), (R_2, q_2)\rangle\coloneqq\langle R_1,R_2\rangle_{\mathcal{F}}+q_1\tr q_2.$
%\end{equation}
Then, direct computations (see also \cite[Proposition 10.7.4, page 631]{bernstein2009matrix} for trace differentiation rules) show that for all $x\in\mathcal{X}$, one has:
\begin{equation}
\label{eq:Vdot}
\begin{aligned}
\dot{V}(x)= &\Tr\left(\widetilde{\Rcal}\tr [\Rcal J_0^{-1}\Rcal\tr\tilde{q}]_\times\Rcal\right)-\Tr\left(\widetilde{\Rcal}\tr\Gamma(\widetilde{\Rcal})\right)\\
&-\tilde{q}\tr\Rcal J_0^{-1}\Rcal\tr\s(\widetilde{\Rcal}\Rcal\tr-\Rcal\widetilde{\Rcal}\tr).\end{aligned}
\end{equation}
In particular, by using the cyclic property of the trace, one has, for all $x\in\mathcal{X}$ 
\begin{equation}
\label{eq:Vdot_2}
\begin{aligned}
\dot{V}(x)=&\Tr\left([\Rcal J_0^{-1}\Rcal\tr\tilde{q}]_\times\Rcal\widetilde{\Rcal}\tr\right)-\Tr\left(\widetilde{\Rcal}\tr\Gamma(\widetilde{\Rcal})\right)\\
&-\tilde{q}\tr\Rcal J_0^{-1}\Rcal\tr\s(\widetilde{\Rcal}\Rcal\tr-\Rcal\widetilde{\Rcal}\tr).
\end{aligned}
\end{equation}
At this stage, notice that from Property~\ref{prop:TraceProp} for all $x\in\mathcal{X}$
$$
\begin{aligned}
&\Tr\left([\Rcal J_0^{-1}\Rcal\tr\tilde{q}]_\times\Rcal\widetilde{\Rcal}\tr\right)\\
&\qquad\qquad\qquad=\tilde{q}\tr\Rcal J_0^{-1}\Rcal\tr\s(\widetilde{\Rcal}\Rcal\tr-\Rcal\widetilde{\Rcal}\tr).
\end{aligned}
$$
Hence, plugging the above expression into \eqref{eq:Vdot_2} gives:
\begin{equation}
\label{eq:Vdot_3}
\dot{V}(x)=-\Tr\left(\widetilde{\Rcal}\tr\Gamma(\widetilde{\Rcal})\right)=-\langle \widetilde{\Rcal},   \Gamma(\widetilde{\Rcal})\rangle_{\mathcal{F}}\qquad\forall x\in\mathcal{X}.
\end{equation}
Using positive definiteness of $\Gamma$, one has that for all $x\in\mathcal{X}$, $\dot{V}(x)\leq 0$. Hence, since $V$ is continuous and positive definite with respect to $\A$ on $\mathcal{X}$, from \cite[Theorem 8.8]{goebel2012hybrid} it follows that $\A$ is stable for \eqref{eq:Closed-loop}. 
To conclude the proof, we show that the assumption in \cite[Theorem 8.8, item b]{goebel2012hybrid} holds. In particular, we show that for all
$r^\star>0$, the largest weakly invariant subset of
\begin{equation}
\label{eq:OmegaLimit}
V^{-1}(r^\star)\cap \dot{V}^{-1}(0)
\end{equation}
is empty. Pick $r^\star>0$ and assume by contradiction that there exists a nonempty weakly invariant set $\Omega$, such that $\Omega\subset V^{-1}(r^\star)\cap \dot{V}^{-1}(0)$. This implies that there exists a solution $\psi$ to \eqref{eq:Closed-loop} such that for all $t\in\dom\psi$, $V(\psi(t))=r^\star$ and $\dot{V}(\psi(t))=0$. Namely, \textcolor{blue}{due to $\Gamma$ being positive definite and $\Gamma(0)=0$}\footnote{\textcolor{blue}{The fact that $\Gamma(0)=0$ is not needed here but it is used below to restrict the dynamics of \eqref{eq:Closed-loop}}.}, for all $t\in\dom\psi$:
\begin{equation}
\begin{aligned}
&\frac{1}{2}\psi_{\tilde{q}}(t)\tr K^{-1}\psi_{\tilde{q}}(t)=r^\star,&\psi_{\tilde{\Rcal}}(t)=0.
\end{aligned}
\label{eq:rstarPos}    
\end{equation}
Combining the above expression with \eqref{eq:Closed-loop} implies that for all\footnote{\textcolor{blue}{$\psi_\Rcal$ and $\psi_\Rcal\tr$ are swapped in the published version.}} $t\in\dom\psi$,  $[\textcolor{blue}{\psi_\Rcal(t) J_0^{-1}\psi_\Rcal\tr(t)}\psi_{\tilde{q}}(t)]_\times\psi_\Rcal(t)=0
$. Therefore, since $\rge\psi_{\Rcal}\subset\SO$, using Property~\ref{prop:Ker}, one has that for all $t\in\dom\psi$, $\textcolor{blue}{\psi_\Rcal(t) J_0^{-1}\psi_\Rcal\tr(t)}\psi_{\tilde{q}}(t)=0$. Hence, from $J_0\succ 0$, it turns out that $\psi_{\tilde{q}}=0$. This, due to
\eqref{eq:rstarPos}, contradicts the fact that $r^\star>0$. To show global pre-attractivity, observe that from Proposition~\ref{prop:Precomp} and \eqref{eq:Vdot_3}, it follows that any solution to \eqref{eq:Closed-loop} is bounded. Therefore, from \cite[Corollary 8.4]{goebel2012hybrid} it follows that for some $r^\star\geq 0$ any complete solution to \eqref{eq:Closed-loop} converges to the nonempty largest weakly invariant subset of \eqref{eq:OmegaLimit}. Hence, since we showed above that $r^\star$ must be zero, it follows that complete solutions to \eqref{eq:Closed-loop} converge to $\A$. This establishes the result. 
\end{proof}
\subsection{Robustness to small measurement noise}
We assume that measurements of the rotation matrix $\Rcal$ are affected by a bounded additive measurement noise $\eta\in\R^{3\times 3}$. In this scenario, the interconnection of the plant and the observer can be modeled as
\begin{equation}
\dot{x}\in F_n(x, \eta)\quad x\in\mathcal{X}\cap\mathcal{O}, \eta\in\R^{3\times 3},
\label{eq:Closed-loop_noise}
\end{equation}
where for all $x\in\mathcal{X}, \eta\in\R^{3\times 3}$, \textcolor{blue}{$F_n(x, \eta)$ are the dynamics of $x$ when 
$\Rcal$ is replaced by $\Rcal+\eta$ in the observer dynamics \eqref{eq:obs}} and $\mathcal{O}\subset\mathcal{U}$ is an arbitrary compact set that is introduced to simplify the analysis. 
\begin{proposition}[Robustness to small measurement noise]
\label{prop:RobStabSmall}
Suppose that the set $\mathcal{A}$ in \eqref{eq:calA} is GpAS for system \eqref{eq:Closed-loop}. Then, there exists $\beta\in\mathcal{KL}$ such that the following holds. Let $\mathcal{O}\subset\mathcal{U}$ be compact and $\varepsilon>0$. Then, there exists $\delta>0$ such that any maximal solution pair\footnote{A pair $(\phi, \eta)$ is a solution pair to \eqref{eq:Closed-loop_noise} if it satisfies its dynamics; see, e.g., \cite{cai2009characterizations} for more details.} $(\phi, \eta)$ to \eqref{eq:Closed-loop_noise} with $\rge\eta\subset\delta\mathbb{B}$ satisfies: 
\begin{equation}
\label{eq:KL_pract}
d(\phi(t), \mathcal{A})\leq \beta(d(\phi(0),\mathcal{A}), t)+\varepsilon \quad \forall t\in\dom\phi.
\end{equation}
\end{proposition}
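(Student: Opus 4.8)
The plan is to deduce Proposition~\ref{prop:RobStabSmall} from the nominal result (Theorem~\ref{the:Main}) by a standard robustness/semiglobal practical stability argument for perturbed constrained differential inclusions. The key observation is that the noisy closed loop \eqref{eq:Closed-loop_noise} is an $\eta$-perturbation of \eqref{eq:Closed-loop} in which $\eta$ enters through continuous functions of the state (only through the algebraic substitution $\Rcal\mapsto\Rcal+\eta$ in the observer terms $\Gamma(\widetilde\Rcal)$, $\Rcal J_0^{-1}\Rcal\tr$, and $\s(\widetilde\Rcal\Rcal\tr-\Rcal\widetilde\Rcal\tr)$), all of which are locally Lipschitz/continuous in $(x,\eta)$. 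Restricting to the arbitrary compact set $\mathcal{O}$ makes everything globally bounded there, so $F_n$ is locally bounded, outer semicontinuous, and convex-valued on $(\mathcal{X}\cap\mathcal{O})\times\R^{3\times 3}$, exactly as in Lemma~\ref{lemma:WellPosed}. Moreover $F_n(x,0)=F(x)$ for all $x$, i.e.\ the perturbed system reduces to the nominal one when $\eta\equiv 0$.

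First I would record that, by Theorem~\ref{the:Main}, $\mathcal{A}$ is GpAS for \eqref{eq:Closed-loop}; since $\mathcal{A}$ is compact and, on the compact set $\mathcal{X}\cap\mathcal{O}$, all maximal solutions are bounded (their $\Rcal,q,u$ components live in compact sets and the remaining components cannot escape $\mathcal{O}$), GpAS upgrades to the existence of a single class-$\mathcal{KL}$ function $\beta$ with $d(\phi(t),\mathcal{A})\le\beta(d(\phi(0),\mathcal{A}),t)$ for every solution $\phi$ to \eqref{eq:Closed-loop}; this $\beta$ is the one claimed in the statement (it does not depend on $\mathcal{O}$ or $\varepsilon$, only on the nominal dynamics and $\mathcal{A}$). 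Next I would invoke the by-now classical semiglobal practical robustness result for perturbations of well-posed (hybrid/constrained) systems — see \cite{goebel2012hybrid} and \cite{cai2009characterizations} — which states: if $\mathcal{A}$ is GpAS for the nominal system with $\mathcal{KL}$-bound $\beta$, and $F_n$ depends on an admissible perturbation $\eta$ with $F_n(\cdot,0)=F(\cdot)$ and $F_n$ satisfying the basic conditions jointly in $(x,\eta)$, then for every compact set $\mathcal{O}$ and every $\varepsilon>0$ there is $\delta>0$ such that every solution pair $(\phi,\eta)$ with $\rge\eta\subset\delta\mathbb{B}$ and graph in $\mathcal{X}\cap\mathcal{O}$ satisfies \eqref{eq:KL_pract}.

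Concretely, the argument behind that invocation (which I would sketch rather than fully expand) is an outer-perturbation / reduction-to-nominal step: one shows that, on the compact set $\mathcal{O}$, $F_n(x,\delta\mathbb{B})\subset F_\delta(x):=\overline{\co}\,F((x+\delta\mathbb{B})\cap\mathcal{X})+\delta\mathbb{B}$, i.e.\ every noisy solution is also a solution of the $\delta$-inflation of the nominal inclusion; this uses uniform continuity of the finitely many maps $(R,\widetilde R)\mapsto [RJ_0^{-1}R\tr\tilde q]_\times R$, $R\mapsto\Gamma(R)$, etc.\ on compacts. Then one applies the robustness of GpAS under such inflations: as $\delta\downarrow 0$, solutions of $\dot x\in F_\delta(x)$ converge (graphically, using outer semicontinuity and a standard sequential-compactness argument on the space of solutions) to solutions of the nominal system, so for $\delta$ small enough the $\beta$-bound holds up to the prescribed tolerance $\varepsilon$ on the compact $\mathcal{O}$.

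The main obstacle is the perturbation-admissibility bookkeeping: one must verify that $F_n$, with the substitution $\Rcal\mapsto\Rcal+\eta$, genuinely produces a well-posed constrained inclusion on $(\mathcal{X}\cap\mathcal{O})\times\R^{3\times 3}$ (closedness of the constraint set, local boundedness, outer semicontinuity jointly in $(x,\eta)$, convex values), and that the $\delta$-inflation containment $F_n(x,\delta\mathbb{B})\subset F_\delta(x)$ holds with an inflation radius that shrinks to $0$ as $\delta\to 0$, \emph{uniformly on $\mathcal{O}$} — this is where compactness of $\mathcal{O}$ and $\mathcal{K}_q,\mathcal{K}_u,\SO$ is essential. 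Once that is in place, the conclusion is immediate from the cited semiglobal-practical-stability-under-perturbations theorem, and no new Lyapunov computation beyond \eqref{eq:Vdot_3} is needed.
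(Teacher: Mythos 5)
Your proposal is correct and follows essentially the same route as the paper: invoke the robustness of GpAS for well-posed constrained inclusions (via \cite{goebel2012hybrid}) to get a $\mathcal{KL}$ bound that degrades gracefully under a $\rho$-inflation of $F$ restricted to the compact set $\mathcal{X}\cap\mathcal{O}$, and then use continuity of the observer dynamics plus compactness to embed every noisy solution pair with $\rge\eta\subset\delta\mathbb{B}$ into the inflated inclusion for $\delta$ small enough. The only cosmetic difference is that the paper inflates only the images, i.e.\ $F^\prime_\rho(x)=F(x)+\rho\mathbb{B}$, whereas you use the full graphical inflation; both containments follow from the same uniform-continuity-on-compacts argument and feed into the same cited results.
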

\begin{proof}
Let $\rho>0$ and define the following $\rho$-perturbation of \eqref{eq:Closed-loop}:
\begin{equation}
\label{eq:F_rho}
\dot{x}\in F^\prime_\rho(x),\qquad x\in\mathcal{X}\cap\mathcal{O},
\end{equation}
with $F^\prime_\rho(x)\coloneqq F(x)+\rho\mathbb{B}$. Since \eqref{eq:Closed-loop} satisfies the basic conditions and from Theorem~\ref{the:Main} $\mathcal{A}$ is GpAS for \eqref{eq:Closed-loop}, by applying \cite[Theorem 6.30, Theorem 7.12, and Lemma 7.20]{goebel2012hybrid}, it follows that there exists\footnote{A function $\beta\colon\R_{\geq 0}\times\R_{\geq 0}\rightarrow\R_{\geq 0}$ is a  class-$\mathcal{KL}$ function, also written $\beta\in\mathcal{KL}$, if it is nondecreasing in its first argument, nonincreasing in its second argument, $\lim_{r\rightarrow 0^+}\beta(r, s)=0$ for all $s\in\R_{\geq 0}$, and $\lim_{s\rightarrow \infty}\beta(r, s)=0$ for all $r\in\R_{\geq 0}$.} $\beta\in\mathcal{KL}$ and $\rho^\star>0$ such that any maximal solution $\varphi$ to \eqref{eq:F_rho} with $\rho\in(0, \rho^\star]$ satisfies $d(\varphi(t), \mathcal{A})\leq \beta(d(\varphi(0),\mathcal{A}), t)+\varepsilon$ for all $t\in\dom\varphi$, which reads as the bound in \eqref{eq:KL_pract}. To conclude, we relate solutions to \eqref{eq:Closed-loop_noise} and solutions to \eqref{eq:F_rho}. To this end, notice that by continuity of the observer dynamics and compactness of $\mathcal{X}\cap\mathcal{O}$, there exists a continuous function $\sigma\colon\R_{\geq 0}\rightarrow\R_{\geq 0}$ with $\sigma(0)=0$ such that for all $\tilde{\delta}\geq0$ and $(x, \eta)\in(\mathcal{X}\cap\mathcal{O})\times\tilde{\delta}\mathbb{B}$, $F_n(x, \eta)\subset F^\prime_{\sigma(\tilde{\delta})}(x)$. Select $\delta>0$ small enough so that $\sigma(\delta)\leq \rho^\star$. Then, given any maximal solution pair $(\phi, \eta)$ to \eqref{eq:Closed-loop_noise} with $\rge\eta\subset\delta\mathbb{B}$, $\phi$ is a solution to \eqref{eq:F_rho} with $\rho=\rho^\star$. Hence, due to the bound established above for solutions to \eqref{eq:F_rho}, \eqref{eq:KL_pract} holds for $\phi$ and this concludes the proof.  
\end{proof}
\section{Angular Speed Estimation on $\Sone$}
\label{sec:SO2}
In this section, we show how the construction proposed in this paper can be also adopted to build an angular speed observer whenever rotations occur around a fixed axis.
This problem has been recently considered in \cite{brentari2018class} via hybrid systems tools. The approach proposed by the authors in that reference essentially relies on the idea that discontinuities induced by phase wrap can be thought as instantaneous changes of the angular position. A wholly similar approach has been adopted in the literature of pulse coupled oscillators; see, e.g., \cite{ferrante2016hybrid,ferrante2017robust,nunez2016synchronization}, just to mention a few.
A similar instance of this state estimation problem has also appeared in \cite{bernard2020estimation} in the context of sensorless control of permanent magnets motors. 

To fully capitalize on the approach proposed in this paper, as opposed to \cite{bernard2020estimation,brentari2018class}, we represent the dynamics of a body spinning around a fixed axis as a dynamical system in the state space $\R^{2\times 2}\times\R$ evolving in $\Sone\times U$, where $U$ is a compact interval, namely:
\begin{equation}
\label{eq:PlantSO2}
\begin{aligned}
&\dot{R}=\omega S R,\quad \dot{\omega}=\frac{1}{J} u.
\end{aligned}    
\end{equation}
The state $R\in\R^{2\times 2}$ represents the attitude of the body that is assumed to be measured, $\omega$ is the angular speed to be estimated, $u\in\R$ is the input torque, $J>0$ the inertia of the body around the spinning axis, and\footnote{\textcolor{blue}{Typo in published version, i.e., $1$ and $-1$ swapped in $S$.}} 
$
\textcolor{blue}{S\coloneqq \begin{bmatrix}
0&-1\\
1&0\end{bmatrix}}$. For the sake of exposition, we assume that $u=0$; nonzero input torques can be easily included in the analysis, e.g., by following the approach outlined in Section~\ref{sec:ProbStatSO3}. 
\begin{remark}
\textcolor{blue}{In \cite{bernard2020estimation,brentari2018class}, the authors assume that only ``wrapped measurements'' for the angular position are available. On the other hand, 
by \textcolor{blue}{observing that there exists a natural bijection between $\Sone$ and $(-\pi, \pi]$}, %recalling that $\Sone$ is isomorphic to $(-\pi, \pi]$, 
wrapped measurements, say in $(-\pi, \pi]$, are nothing but the image of elements in $\Sone$ via the above mentioned bijection. Thus, the approach we propose in this paper can be directly employed  
in the settings analyzed in\footnote{
\textcolor{blue}{
In the published version, the statement ``$\Sone$ and $(-\pi, \pi]$ are isomorphic'' should read ``there exists a bijection between $\Sone$ and $(-\pi, \pi]$''. 
Indeed, obviously there is no continuous mapping from $\Sone$ to $(-\pi, \pi]$. Notice, our argument to relate wrapped measurements to elements in $\Sone$ does not rely on the continuity of such a mapping but only on its bijectivity; see Section~\ref{sec:disc} in the Appendix for more details. 
%This is not true since there is no continuous bijection from $(-\pi, \pi]$ to $\Sone$. 
%However, in our setting, the continuity of the (natural) mapping from $\Sone$ to $(-\pi, \pi]$ is not needed.  
}
}
 \cite{bernard2020estimation,brentari2018class}.}
\end{remark}
\subsection{Angular speed observer}
Following Section~\ref{sec:ProbStatSO3}, we propose the following observer\footnote{A reduced order observer may be easily derived by replacing the measurement $R$ by $Rv$, with $0\neq v\in\R^2$. However, the resulting observer cannot be employed for angular position filtering as in Section~\ref{sec:AttiFiltering}.}:
\begin{equation}
\begin{aligned}
&\dot{\widehat{R}}=\hat{\omega}S R+\Gamma(R-\widehat{R}),\quad\dot{\hat{\omega}}=\kappa\Tr((R-\widehat{R})\tr S R)
\end{aligned}
\label{eq:ObsSO2}    
\end{equation}
where $\Gamma\in\mathcal{C}^0(\R^{2\times 2};\R^{2\times 2})$ and $\kappa\in\R$ are to be designed. By introducing the error coordinates $\widetilde{R}\coloneqq R-\widehat{R}$, $\tilde{\omega}\colon=\omega-\hat{\omega}$, the interconnection of \eqref{eq:PlantSO2} and \eqref{eq:ObsSO2} can be written as the following dynamical system with state $\chi\coloneqq (R,\omega, \widetilde{R},\tilde{\omega})\in\mathcal{V}\coloneqq\R^{2\times 2}\times U\times\R^{2\times 2}\times\R$
\begin{equation}
\left\{\begin{array}{ll}
\begin{aligned}
&\dot{R}=\omega S R\\
&\dot{\omega}=0\\
&\dot{\widetilde{R}}=\tilde{\omega}S R-\Gamma(\widetilde{R})\\
&\dot{\tilde{\omega}}=-\kappa\Tr(\widetilde{R}\tr S R)
\end{aligned} &x\in\mathcal{T}
\end{array}\right.,
\label{eq:ObsSO2--Plant}
\end{equation}
where $\mathcal{T}\coloneqq\Sone\times U\times\R^{2\times 2}\times\R$.

Similarly to Section~\ref{sec:Stability}, we introduce the  compact set:
\begin{equation}
\mathcal{W}\coloneqq\{\chi\in\mathcal{T}\colon \widetilde{R}=0, \tilde{\omega}=0 
\}\subset\mathcal{V},
\label{eq:calW}  
\end{equation}
wherein the estimation error is equal to zero and provide sufficient conditions for global pre-asymptotic stability of $\mathcal{W}$.  
\begin{theorem}
\label{thm:Theorem2}
Let $\Gamma$ be positive definite (with respect to the Frobenius inner product on $\R^{2\times 2}$), $\Gamma(0)=0$, and $\kappa>0$. Then, the set\footnote{\textcolor{blue}{Notice that maximal solutions to \eqref{eq:ObsSO2--Plant} are complete, hence in this case GpAS turns out to be GAS.}}  $\mathcal{W}$ in \eqref{eq:calW} is GpAS for \eqref{eq:ObsSO2--Plant}. 
\hfill$\diamond$
\end{theorem}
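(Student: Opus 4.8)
The plan is to mirror the argument in the proof of Theorem~\ref{the:Main}, since \eqref{eq:ObsSO2--Plant} is the planar analogue of \eqref{eq:Closed-loop}. First I would verify that \eqref{eq:ObsSO2--Plant} satisfies the basic conditions of Section~\ref{sec:prel}: $\mathcal{T}=\Sone\times U\times\R^{2\times 2}\times\R$ is closed because $\Sone$ and $U$ are compact, and the right-hand side is single-valued and continuous, hence locally bounded, outer semicontinuous, and trivially convex-valued. This legitimizes the use of the invariance principle for constrained differential inclusions.

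Next I would introduce the Lyapunov candidate
\begin{equation}
W(\chi)\coloneqq\tfrac{1}{2}\langle\widetilde{R},\widetilde{R}\rangle_{\mathcal{F}}+\tfrac{1}{2\kappa}\tilde{\omega}^2\qquad\forall\chi\in\mathcal{V},
\label{eq:Wplan}
\end{equation}
which is continuously differentiable and positive definite with respect to $\mathcal{W}$ on $\mathcal{T}$ (here $d(\chi,\mathcal{W})=\sqrt{\vert\widetilde{R}\vert_{\mathcal{F}}^2+\tilde{\omega}^2}$). Differentiating along \eqref{eq:ObsSO2--Plant} gives
$\dot{W}(\chi)=\Tr(\widetilde{R}\tr(\tilde{\omega}SR-\Gamma(\widetilde{R})))-\tilde{\omega}\Tr(\widetilde{R}\tr SR)=-\Tr(\widetilde{R}\tr\Gamma(\widetilde{R}))=-\langle\widetilde{R},\Gamma(\widetilde{R})\rangle_{\mathcal{F}}\le 0$,
the cross terms cancelling exactly as in \eqref{eq:Vdot}--\eqref{eq:Vdot_3}. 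Positive definiteness of $\Gamma$ then yields stability of $\mathcal{W}$ via \cite[Theorem~8.8]{goebel2012hybrid}, and boundedness of all solutions follows since $R$ lives in the compact set $\Sone$, $\omega$ is constant in $U$, and $W$ is nonincreasing so $\widetilde{R}$ and $\tilde{\omega}$ stay bounded; completeness of maximal solutions is immediate because no finite-time escape or boundary obstruction can occur.

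For pre-attractivity I would apply the invariance principle: for each $r^\star>0$ I must show the largest weakly invariant subset of $W^{-1}(r^\star)\cap\dot{W}^{-1}(0)$ is empty. On $\dot{W}^{-1}(0)$ positive definiteness of $\Gamma$ together with $\Gamma(0)=0$ forces $\widetilde{R}\equiv 0$, hence $\dot{\widetilde{R}}\equiv 0$, so $\tilde{\omega}SR\equiv 0$ along any such solution; since $R\in\Sone$ is nonsingular and $S$ is invertible, $SR$ is nonsingular, forcing $\tilde{\omega}\equiv 0$, which makes $W\equiv 0$, contradicting $r^\star>0$. Then \cite[Corollary~8.4]{goebel2012hybrid} (or the analogue used in Theorem~\ref{the:Main}) gives convergence of every complete solution to $\mathcal{W}$, establishing GpAS.

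The proof is essentially routine given the template of Theorem~\ref{the:Main}; the only point requiring a little care is the replacement of Property~\ref{prop:Ker} by the elementary observation that $SR$ is nonsingular on $\Sone$ (which is what kills $\tilde{\omega}$ on the invariant set), and making sure the cross-term cancellation in $\dot{W}$ is stated with the correct sign given the definition of $S$. I do not anticipate a genuine obstacle.
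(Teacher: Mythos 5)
Your proof is correct and is precisely the ``simple adaptation of the proof of Theorem~\ref{the:Main}'' that the paper invokes when omitting the argument: same Lyapunov candidate (with $K^{-1}$ replaced by $1/\kappa$), same cross-term cancellation in $\dot W$, and the same invariance-principle contradiction, with Property~\ref{prop:Ker} correctly replaced by the nonsingularity of $SR$ on $\Sone$ (which is even simpler here since $\tilde\omega$ is scalar). No gaps.
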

The above result can be easily proven via a simple adaptation of the proof of Theorem~\ref{the:Main}, therefore its proof is omitted.
\subsection{Discussion on Angular Position Filtering}
\label{sec:AttiFiltering}
The objective of the proposed observer is to generate an estimate of the angular speed based on attitude measurements. In this section, we briefly discuss about the problem of attitude filtering, which is a relevant problem in the presence of noisy measurements. 
As mentioned in Section~\ref{sec:ProbStatSO3}, the proposed observer does not provide a meaningful estimate of the attitude. 
A possible approach to solve this problem consists of computing online the ``nearest'' rotation matrix to the estimate generated by the observer. However, obtaining a closed-form expression of such a ``projection'' operation turns out to be a difficult problem; see \cite{sarabandi2020closed}. Nonetheless, in this section we show how such an explicit expression can be obtained in the context of $\Sone$.

The set of nearest matrices to a given matrix $H\in\R^{2\times 2}$ can be implicitly represented by the set-valued map $\Pi\colon\R^{2\times 2}\rightrightarrows\Sone$ defined for all $H\in\R^{2\times 2}$ as:
\begin{equation}
\Pi(H)\!\!\coloneqq\!\underset{R}{\argmin}\!\left\{\vert R-H\vert^2_{\mathcal{F}}\,\,\vert\,\,  R\tr R\!=\!\Id, \det(R)=1\right\}.
\label{eq:Pi}
\end{equation}
The result given next provides a closed-form expression of the set valued map $\Pi$.  
\begin{proposition}
Let $\Upsilon\coloneqq\{H\in\R^{2\times 2}\, \vert\, H=H\tr, \Tr(H)=0\}$. Define, for all $H\in\R^{2\times 2}\setminus\Upsilon$:
$$
\begin{aligned}
&\vartheta(H)\coloneqq\frac{h_{11}+h_{22}}{\sqrt{{\left(h_{11}+h_{22}\right)}^2+{\left(h_{12}-h_{21}\right)}^2}},\\
&\digamma(H)\coloneqq\frac{h_{12}-h_{21}}{\sqrt{{\left(h_{11}+h_{22}\right)}^2+{\left(h_{12}-h_{21}\right)}^2}}
\end{aligned}
$$
where for all $(i,j)\in\{1,2\}^2$, $h_{ij}$ is the $i,j$-entry of $H$. Then, the following identity holds:
\begin{equation}
\Pi(H)=\begin{cases}
\Sone&\text{if}\,H\in\Upsilon\\
\left\{\begin{bmatrix}
\vartheta(H)&\digamma(H)\\
-\digamma(H)&\vartheta(H)
\end{bmatrix}\right\}&\text{otherwise}.
\end{cases}
\label{eq:PiExp}
\end{equation}
\end{proposition}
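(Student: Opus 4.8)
The plan is to reduce the constrained least–squares problem \eqref{eq:Pi} to maximizing a linear functional over the unit circle. First I would expand the Frobenius objective: since $\vert R\vert^2_{\mathcal{F}}=\Tr(R\tr R)=\Tr(\Id)=2$ for every $R\in\Sone$, one has $\vert R-H\vert^2_{\mathcal{F}}=2-2\Tr(R\tr H)+\vert H\vert^2_{\mathcal{F}}$, in which only the middle term depends on $R$. Consequently $\Pi(H)=\argmax_{R\in\Sone}\Tr(R\tr H)$, and this maximum is attained because $\Sone$ is compact and the map $R\mapsto\Tr(R\tr H)$ is continuous.

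Next I would invoke the standard parametrization of $\Sone$: every $R\in\Sone$ is of the form $R=\begin{bmatrix} c & -s \\ s & c\end{bmatrix}$ with $(c,s)\in\R^2$ and $c^2+s^2=1$ (here the determinant constraint $\det R=1$ is precisely what discards the ``reflection'' branch $\begin{bmatrix} c & s \\ s & -c\end{bmatrix}$, so no minimizer is lost). A direct computation of $R\tr H$ then gives $\Tr(R\tr H)=c\,(h_{11}+h_{22})+s\,(h_{21}-h_{12})=\langle (c,s),(a,b)\rangle$, where $a\coloneqq h_{11}+h_{22}$ and $b\coloneqq h_{21}-h_{12}=-(h_{12}-h_{21})$. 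Thus the problem is to maximize $\langle (c,s),(a,b)\rangle$ over the unit circle.

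I would then split into the two cases of \eqref{eq:PiExp}. If $H\in\Upsilon$, i.e. $H=H\tr$ and $\Tr(H)=0$, then $(a,b)=(0,0)$, so $\Tr(R\tr H)=0$ for every $R\in\Sone$; hence the objective in \eqref{eq:Pi} is constant on $\Sone$ and $\Pi(H)=\Sone$. If $H\notin\Upsilon$, then $(a,b)\neq(0,0)$, and Cauchy–Schwarz gives $\langle (c,s),(a,b)\rangle\leq\sqrt{a^2+b^2}$ with equality if and only if $(c,s)=\tfrac{1}{\sqrt{a^2+b^2}}(a,b)$; this unique maximizer yields $c=\vartheta(H)$ and $s=-\digamma(H)$ (using $a^2+b^2=(h_{11}+h_{22})^2+(h_{12}-h_{21})^2$), so the unique minimizer of \eqref{eq:Pi} is $R=\begin{bmatrix}\vartheta(H)&\digamma(H)\\-\digamma(H)&\vartheta(H)\end{bmatrix}$, which indeed belongs to $\Sone$ because $\vartheta(H)^2+\digamma(H)^2=1$. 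This is exactly \eqref{eq:PiExp}.

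The argument is entirely elementary, so I do not anticipate a genuine obstacle; the only points that need care are the sign bookkeeping relating $(c,s)$ to $(\vartheta,\digamma)$ — note that $s$ corresponds to $-\digamma(H)$, not $\digamma(H)$, because of the off-diagonal sign pattern in the parametrization of $\Sone$ — and checking that the parametrization together with $\det R=1$ exhausts $\Sone$, so that the optimization really ranges over all rotation matrices.
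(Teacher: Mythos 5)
Your proof is correct, and it takes a genuinely different route from the paper. The paper parametrizes $\Sone$ by $(x,y)$ with $x^2+y^2=1$, keeps the quadratic objective, and applies the Lagrange multiplier method: it derives the stationarity conditions, finds two critical points in the case $H\notin\Upsilon$, and then must explicitly compare the objective values at the two candidates to identify the global minimizer (necessary conditions alone do not distinguish them). You instead observe that $\vert R\vert^2_{\mathcal{F}}=2$ is constant on $\Sone$, so the problem collapses to maximizing the linear functional $R\mapsto\Tr(R\tr H)=c\,(h_{11}+h_{22})+s\,(h_{21}-h_{12})$ over the unit circle, which Cauchy--Schwarz solves in one stroke with a built-in certificate of global optimality and uniqueness. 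This is essentially the standard orthogonal-Procrustes argument specialized to $2\times 2$; it buys you sufficiency for free and avoids the case-by-case comparison of critical points, at the cost of nothing. Your sign bookkeeping checks out: with $R=\left[\begin{smallmatrix}c&-s\\ s&c\end{smallmatrix}\right]$ one gets $s=-\digamma(H)$ at the optimum, which reproduces exactly the matrix $\left[\begin{smallmatrix}\vartheta(H)&\digamma(H)\\ -\digamma(H)&\vartheta(H)\end{smallmatrix}\right]$ in \eqref{eq:PiExp}, and the case $H\in\Upsilon$ corresponds precisely to the linear functional vanishing identically, giving $\Pi(H)=\Sone$.
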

\begin{proof}
The proof of the result is established applying Lagrange multiplier method. In particular, by recalling that $\Sone=\left\{\begin{bmatrix}
x&y\\
-y&x
\end{bmatrix}\colon x^2+y^2=1, x,y\in\R
\right\}$, the optimization problem associated to \eqref{eq:Pi} can be equivalently rewritten as:
\begin{equation}
\label{eq:Opti}
\begin{aligned}
&\underset{x,y\in\R}{\minimize}\frac{(h_{21} + y)^2 + (h_{11} - x)^2 + (h_{22} - x)^2 + (h_{12}- y)^2}{2}\\
&\quad\text{subject\,\,to}\,\,x^2+y^2=1
\end{aligned}
\end{equation}
The Lagrangian associated to \eqref{eq:Opti} reads:
$$
\begin{aligned}
\mathcal{L}_H(R,\!\lambda)\!=\!&\underbrace{\frac{(h_{21}\!+\! y)^2 + (h_{11}\! -\! x)^2 + (h_{22}\!-\! x)^2 + (h_{12}\!-\! y)^2}{2}}_{g(R)}\\
&+\lambda (1-x^2-y^2)
\end{aligned}
$$
where $\lambda\in\R$ is the Lagrange multiplier. Therefore, the necessary conditions for optimality read:
 \begin{equation}
 \begin{aligned}
 &\textcolor{black}{(2-\lambda^\star)x^\star-h_{11}- h_{22}=0,}
 &(2-\lambda^\star)y^\star + h_{21}-h_{12}=0\\
 &{x^\star}^2+{y^\star}^2=1
 \end{aligned}
 \label{eq:LagrangeConds}
\end{equation}
Now we analyze two possible cases:
\begin{itemize}
\item Case I: $H\in\Upsilon$. In this case, solving \eqref{eq:LagrangeConds} one gets
$(2-\lambda^\star)x^\star=0$, $(2-\lambda^\star)y^\star=0$,
${x^\star}^2+{y^\star}^2=1$,
which gives $R^\star\in\Sone, \lambda^\star=2$, thereby showing \eqref{eq:PiExp} for all $H\in\Upsilon$.
\smallskip

\item Case II: $H\notin\Upsilon$. In this case \eqref{eq:LagrangeConds} yields $x^\star_{1,2}=\pm\vartheta(H),\quad y^\star_{1,2}=\pm\digamma(H)$, $\lambda^\star_{1,2}=2\pm\sqrt{(h_{11}+ h_{22})^2+(h_{12}- h_{21})^2}$.
At this stage, notice that
$$
\begin{aligned}
&g\left(\left[\begin{smallmatrix}
-\vartheta(H)&-\digamma(H)\\
\digamma(H)&-\vartheta(H)
\end{smallmatrix}\right]\right)-g\left(\left[\begin{smallmatrix}
\vartheta(H)&\digamma(H)\\
-\digamma(H)&\vartheta(H)
\end{smallmatrix}\right]
\right)\\
&\quad\qquad=-\frac{4\left({\left(h_{11}+h_{22}\right)}^2+{\left(h_{12}-h_{21}\right)}^2\right)}{\sqrt{{\left(h_{11}+h_{22}\right)}^4+{\left(h_{12}-h_{21}\right)}^4}}\leq 0
\end{aligned}
$$
Namely, the unique element of \eqref{eq:Pi} reads as $R^\star=\left[\begin{smallmatrix}
\vartheta(H)&\digamma(H)\\
-\digamma(H)&\vartheta(H)
\end{smallmatrix}\right]$. This concludes the proof.
\end{itemize}
\end{proof}
The property below holds for set valued map $\Pi$.
\begin{lemma}
\label{lemm:DiffPi}
Let $R\mapsto\pi(R)\in\Pi(R)$ be any selection of $\Pi$ and $\mathcal{Q}\coloneqq\R^{2\times 2}\setminus\Upsilon$. Then, $\pi$ is continuously differentiable on $\mathcal{Q}$. \QEDB
\end{lemma}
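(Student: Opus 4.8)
The plan is to show that the selection $\pi$, when restricted to $\mathcal{Q}=\R^{2\times 2}\setminus\Upsilon$, coincides with the \emph{single-valued} map obtained from the explicit formula \eqref{eq:PiExp}, and then to verify that this explicit map is $\mathcal{C}^1$ by inspection of its component functions $\vartheta$ and $\digamma$. The first observation is that on $\mathcal{Q}$ the set-valued map $\Pi$ is in fact a singleton: the preceding proposition establishes that for all $H\in\R^{2\times 2}\setminus\Upsilon$ one has $\Pi(H)=\left\{\left[\begin{smallmatrix}\vartheta(H)&\digamma(H)\\-\digamma(H)&\vartheta(H)\end{smallmatrix}\right]\right\}$. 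Consequently \emph{any} selection $\pi$ must agree with this unique element on $\mathcal{Q}$; there is no freedom left. Hence it suffices to prove that the map
$$
H\longmapsto\begin{bmatrix}\vartheta(H)&\digamma(H)\\-\digamma(H)&\vartheta(H)\end{bmatrix}
$$
is continuously differentiable on $\mathcal{Q}$, since continuous differentiability of a matrix-valued map is equivalent to continuous differentiability of each of its (four) scalar entries.

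Next I would reduce everything to the two scalar functions $\vartheta$ and $\digamma$. Writing $p(H)\coloneqq h_{11}+h_{22}$ and $s(H)\coloneqq h_{12}-h_{21}$, both $p$ and $s$ are linear (hence $\mathcal{C}^\infty$) functions of the entries of $H$, and
$$
\vartheta(H)=\frac{p(H)}{\sqrt{p(H)^2+s(H)^2}},\qquad
\digamma(H)=\frac{s(H)}{\sqrt{p(H)^2+s(H)^2}}.
$$
The key point is that $H\in\Upsilon$ is exactly the condition $H=H\tr$ and $\Tr(H)=0$, which is equivalent to $p(H)=0$ and $s(H)=0$. Therefore, on $\mathcal{Q}=\R^{2\times2}\setminus\Upsilon$, we have $p(H)^2+s(H)^2>0$, so the radicand is strictly positive and the square root $t\mapsto\sqrt{t}$ is $\mathcal{C}^\infty$ on $(0,\infty)$. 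Thus $\vartheta$ and $\digamma$ are compositions and quotients (with nonvanishing denominator) of $\mathcal{C}^\infty$ functions on $\mathcal{Q}$, hence $\mathcal{C}^1$ (in fact $\mathcal{C}^\infty$) there. Assembling the four entries back into the matrix shows $\pi$ is $\mathcal{C}^1$ on $\mathcal{Q}$.

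I do not expect any serious obstacle here; the statement is essentially a regularity bookkeeping consequence of the explicit formula just derived. The only point that requires a moment's care is the equivalence $H\in\Upsilon\iff\big(h_{11}+h_{22}=0$ and $h_{12}-h_{21}=0\big)$, so that the denominator $\sqrt{(h_{11}+h_{22})^2+(h_{12}-h_{21})^2}$ is guaranteed nonzero precisely on the complement $\mathcal{Q}$ where $\pi$ is claimed smooth; this is where the specific description of $\Upsilon$ as symmetric trace-zero matrices is used. Everything else is the standard fact that sums, products, quotients with nonvanishing denominator, and compositions of $\mathcal{C}^1$ maps are $\mathcal{C}^1$, together with the elementary observation that a matrix-valued map is $\mathcal{C}^1$ iff all its entries are.
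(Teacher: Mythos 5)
Your proof is correct and follows essentially the same route as the paper's: observe that the preceding proposition makes $\Pi$ single-valued on $\mathcal{Q}$ so any selection equals the explicit matrix $T(H)$ there, and then check that $T$ is $\mathcal{C}^1$ because the radicand $(h_{11}+h_{22})^2+(h_{12}-h_{21})^2$ vanishes exactly on $\Upsilon$. Your version merely spells out the entrywise smoothness bookkeeping that the paper leaves implicit (and you could note explicitly that $\mathcal{Q}$ is open, which the paper does state), but there is no substantive difference.
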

\begin{proof}
The proof simply follows by noticing that $\mathcal{Q}$ is open, for all $H\in\mathcal{Q}$, $
\pi(H)=\underbrace{\left[\begin{smallmatrix}
\vartheta(H)&\digamma(H)\\
 -\digamma(H)&\vartheta(H)
 \end{smallmatrix}\right]}_{T(H)}
$, and that $H\mapsto T(H)$ is continuously differentiable.
% Observe that the set $\Upsilon$ is closed, thereby implying that $\mathcal{Q}\coloneqq\R^{2\times 2}\setminus\Upsilon$ is open. Let $\varepsilon>0$, pick $\delta>0$ small enough
% such that the following conditions hold:
% $$
% \mathcal{N}_\delta\coloneqq\{R\in\R^{2\times 2}\colon\vert R-R^\star\vert_{\mathcal{F}}\leq\delta\}\subset\mathcal{Q}
% $$
% and for all $R\in\mathcal{N}_\delta$
% \begin{equation}
% \label{eq:epsCont}
% \left\vert\underbrace{\begin{bmatrix}
% \vartheta(R)&\digamma(R)\\
% -\digamma(R)&\vartheta(R)
% \end{bmatrix}}_{P(R)}-R^\star\right\vert_{\mathcal{F}}\leq\varepsilon
% \end{equation}
% Such a $\delta$ exists due to  $R^\star\in\Sone\subset\mathcal{Q}$, $\mathcal{Q}$ being open, and $R\mapsto P(R)$ being continuous on $\R^{2\times 2}\setminus\Upsilon$. From the definition of $\Pi$, for all $R\in\mathcal{N}_\delta$ and any selection $\pi$, one has $\pi(R)=P(R)$. Hence, using \eqref{eq:epsCont}, one has that that for all $R\in\mathcal{N}_\delta$,  $\vert\pi(R)-R^\star\vert_{\mathcal{F}}\leq \varepsilon$. This end the proof.
\end{proof}
Lemma~\ref{lemm:DiffPi} ensures that for all $\R_{\geq 0}\ni t\mapsto R(t)\in\Sone$ and $\R_{\geq 0}\ni t\mapsto \widetilde{R}(t)$, the following implication holds:
$$
\lim_{t\rightarrow\infty}\widetilde{R}(t)=0\implies
\lim_{t\rightarrow\infty}\Pi(R(t)-\widetilde{R}(t))-R(t)=0
$$
Roughly speaking, this ensures that projecting $\widehat{R}$ onto $\Sone$   yields a proper converging estimate of the attitude $R$.
\section{Numerical Examples}
\label{sec:Ex}
\subsection{Angular Speed estimation on $\SO$}
\label{sec:ExSO3}
In this example, we showcase the effectiveness of the results proposed in Section~\ref{sec:Stability} via numerical simulations. In particular, we consider the scenario of \cite[Section V]{wu2015angular} in which $J_0=\begin{bmatrix}
5 &0& 0\\ 0& 1& 0\\ 0& 0& 2
\end{bmatrix}$ and \eqref{eq:plant} is initialized as $\Rcal(0)=\exp(\pi/4 [e_1]_\times)$, $\omega(0)=(1,-1.5, 2.5)$, where $e_1=(1, 0, 0)$. The  observer \eqref{eq:obs} is initialized at $\widehat{\Rcal}(0)=\Rcal_0, \hat{q}(0)=0$ and for simplicity we set $u=0$. \figurename~\ref{fig:omega3} depicts the evolution of  $\omega-\hat{\omega}$ with $\Gamma=20\id$ and different selections of $K$ and $\Gamma$. \figurename~\ref{fig:omega3} clearly shows that the proposed observer yields a converging estimate of $\omega$ and underlines the impact of the parameters $K$ and $\Gamma$ on the convergence speed. Namely, for $K=100 J_0$ and $\Gamma=20\id$ the proposed observer practically converges in $1.5$ seconds, which appears to be much faster when compared with the numerical results in \cite{wu2015angular}. Notice also that for the same selection of $K$, taking $\Gamma=10^3\id$ negatively affects the convergence speed.  
\begin{figure}[h]
\centering
\psfrag{t}[1][1][1]{$t$ [sec]}
\psfrag{d1}[l][l][0.82]{$K_1$}
\psfrag{d2}[l][l][0.82]{$K_2$}
\psfrag{d3}[l][l][0.82]{$K_3$}
\psfrag{d4}[l][l][0.82]{$K_4$}
\psfrag{y1}[1][1][1]{$\omega_1-\hat{\omega}_1$}
\psfrag{y2}[1][1][1]{$\omega_2-\hat{\omega}_2$}
\psfrag{y3}[1][1][1]{$\omega_3-\hat{\omega}_3$}
\includegraphics[trim=0.1cm 0.1cm 0.1cm 0.1cm, clip, width=\columnwidth]{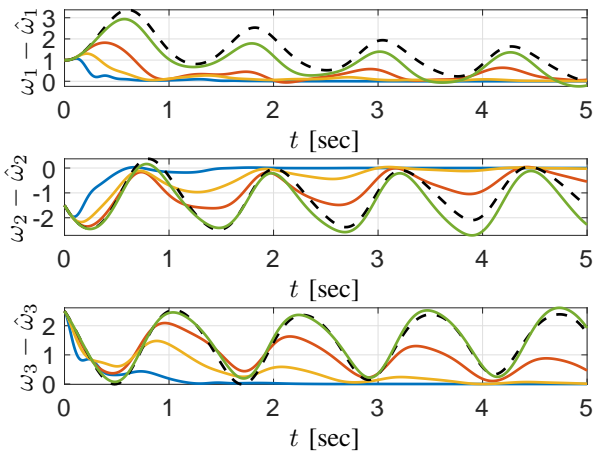}
\caption{Evolution of the angular speed estimation error for the example in Section~\ref{sec:ExSO3} for different tunings: $K_1=100 J_0, \Gamma=20\id$ (blue), $K_2=10 J_0, \Gamma=20\id$ (orange), $K_3=30 J_0, \Gamma=20\id$ (yellow), $K_4=5 \Id, \Gamma=20\id$ (dashed black), and  $K_4=100 J_0, \Gamma=10^3\id$ (green).}
\label{fig:omega3}
\end{figure}
To assess the impact of measurement noise on the observer, in \figurename~\ref{fig:noiseSO3} we report the evolution of $\omega-\hat{\omega}$ (after transient) in the presence of a band-limited white measurement noise. The figure underlines that fast convergence speed comes at the price of increased noise sensitivity.  
\begin{figure}[h]
\psfrag{t}[1][1][1]{$t$ [sec]}
\psfrag{y1}[1][1][1]{$\omega_1-\hat{\omega}_1$}
\psfrag{y2}[1][1][1]{$\omega_2-\hat{\omega}_2$}
\psfrag{y3}[1][1][1]{$\omega_3-\hat{\omega}_3$}
\psfrag{d1}[l][l][0.82]{$K_1$}
\psfrag{d2}[l][l][0.82]{$K_2$}
\psfrag{d3}[l][l][0.82]{$K_3$}
\includegraphics[trim=0.1cm 0.1cm 0.1cm 0.1cm, clip, width=\columnwidth]{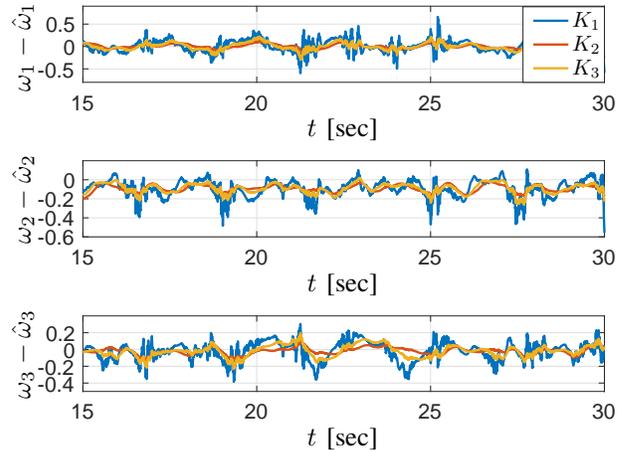}
\caption{Evolution of angular speed estimation error for the example of Section~\ref{sec:ExSO3} with $\Gamma=20\id$ for different selections of $K$: $K_1=100 J_0$, $K_2=10 J_0$, and $K_3=30 J_0$ in the presence of measurement noise. Simulations are performed with a band-limited white noise with noise power $10^{-5}$.}
\label{fig:noiseSO3}
\end{figure}

% The picture indicates that a large gain does not reflect into a fast convergence speed. This can be intuitively explained by noticing that the injection term in the dynamics of $\hat{q}$ is zero when $\widetilde{\Rcal}=\0$ and that when $\Gamma$ gets large, $\widetilde{\Rcal}$ converges to zero faster.
\subsection{Angular speed estimation on $\Sone$}
\label{ex:SO2}
In this section, we illustrate the results of Section~\ref{sec:SO2} via numerical simulations. As in \cite{bernard2020estimation,brentari2018class}, we assume to have access to measurements of the angular position $\theta$ wrapped in the interval $(-\pi, \pi]$ and map those to $\Sone$ via the standard \textcolor{blue}{bijection} from $(-\pi, \pi]$ to $\Sone$. \figurename~\ref{fig:omegaSO2} shows the evolution of $\theta, \omega$, $\hat{\omega}$, and $\hat{\theta}$ from the following initial condition: $\theta(0)=\pi/2$, $\widehat{R}(0)=\Id$, $\omega(0)=10$, $\hat{\omega}(0)=0$, when $\Gamma=40\id$ and $K=200$. The estimate $\hat{\theta}$ is obtained via \eqref{eq:PiExp} by mapping $\Pi(\widehat{R})\in\Sone$ to $(-\pi, \pi]$. The picture points out that, despite wrapped measurements, the proposed observer provides a convergent estimate of the angular speed.
%%%%%NOISE%%%%%%%%%%%%%%%%%%%%
To illustrate the results of Section~\ref{sec:AttiFiltering}, we perform some simulations in which angular measurements are affected by a bounded additive noise $\eta(t)=0.1\sin(10^4 t)$. \figurename~\ref{fig:omegaSO2Noise} reports the evolution of the estimate $\hat{\theta}$ of the angular position along with the angular speed estimation error. Simulations show that the proposed observer is robust to measurement noise and that the attitude filtering technique of Section~\ref{sec:AttiFiltering} is effective in yielding a smoother estimate of the angular position.
\begin{figure}[ht]
\psfrag{t}[1][1][1]{$t$ [sec]}
\psfrag{d1}[1][1][1]{$\omega$}
\psfrag{d2}[1][1][1]{$\hat{\omega}$}
\psfrag{p1}[1][1][1]{$\theta$}
\psfrag{p2}[1][1][1]{$\hat{\theta}$}
\psfrag{v}[1][1][1]{$\omega,\hat{\omega}$}
\psfrag{pi}[1][1][1]{$\pi$}
\psfrag{-pi}[1][1][1]{$-\pi$}
\psfrag{pi/2}[1][1][1]{$\frac{\pi}{2}$}
\psfrag{-pi/2}[1][1][1]{$-\frac{\pi}{2}$}
\psfrag{p}[1][1][1]{$\theta, \hat{\theta}$}
\includegraphics[trim=0cm 0cm 0cm 0cm, clip,width=\columnwidth]{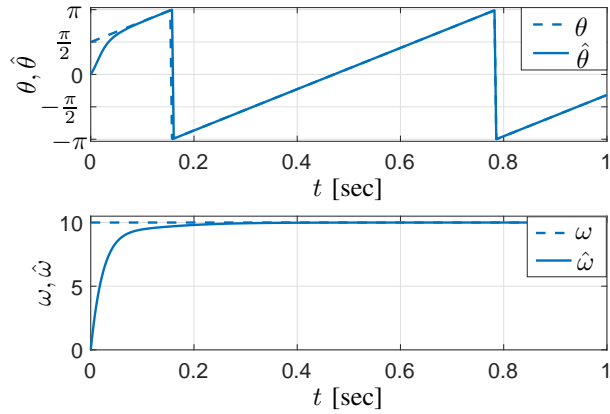}
\caption{Evolution of $\omega$, $\hat{\omega}$, and of the measured angular position $\theta$, and its estimate $\hat{\theta}$ for the  example in Section~\ref{ex:SO2}.}
\label{fig:omegaSO2}
\end{figure}
\begin{figure}[h!]
\vspace{-0.05cm}
\psfrag{t}[1][1][1]{$t$ [sec]}
\psfrag{v}[1][1][1]{$\omega-\hat{\omega}$}
\psfrag{pi}[1][1][1]{$\pi$}
\psfrag{-pi}[1][1][1]{$-\pi$}
\psfrag{pi/2}[1][1][1]{$\frac{\pi}{2}$}
\psfrag{-pi/2}[1][1][1]{$-\frac{\pi}{2}$}
\psfrag{d1}[1][1][1][0]{$\theta$}
\psfrag{d2}[1][1][1][0]{$\hat{\theta}$}
\psfrag{p}[1][1][1]{$\theta, \hat{\theta}$}
\includegraphics[trim=0cm 0cm 0cm 0cm, clip, width=\columnwidth]{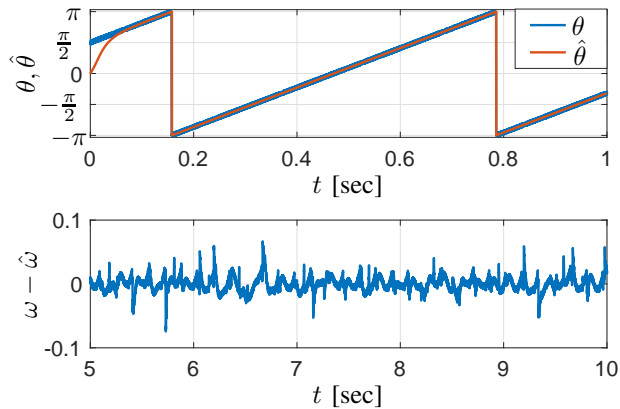}
\caption{Simulations in the presence of noisy measurements for the example in Section~\ref{ex:SO2}.}
\label{fig:omegaSO2Noise}
\end{figure}
\section{Conclusion}
In this letter, we provided new insights on the angular speed estimation problem for rigid bodies. The main contribution is an angular speed observer freed from the typical geometrical constraints induced by the manifold $\SO$. Global asymptotic stability of the estimation error dynamics was established via a Barbasin-Krasovskii-LaSalle principle. The methodology was specialized to the case of systems rotating around a fixed axis. In this specific case, we showed how the proposed observer can also be used to filter noisy angular measurements. Future research directions include an extension towards {\em exponential} convergence as in \cite{SarrasECC20} and a finer analysis of the effect of measurement noise. In addition, experimental validation and sampled-data implementations will be considered. 
\bibliographystyle{plain}
\balance
\bibliography{biblio}
\appendix
\subsection{Additional details for the proof of Theorem~\ref{the:Main}}
The proof of Theorem~\ref{the:Main} relies on the computation of vector $\nabla V(\widetilde{\Rcal}, \tilde{q})$, 
where 
$$
(\widetilde{\Rcal}, \tilde{q})\mapsto V(\widetilde{\Rcal}, \tilde{q})=\frac{1}{2}\langle \widetilde{\Rcal},\widetilde{\Rcal}\rangle_{\mathcal{F}}+\frac{1}{2} \tilde{q}\tr K^{-1}\tilde{q} 
$$
which enables to express the directional derivative of $x\mapsto V(\widetilde{\Rcal}, \tilde{q})$ along the dynamics of system \eqref{eq:Closed-loop} as
$$
\langle \nabla V(\widetilde{\Rcal}, \tilde{q}), h(x)\rangle
$$
where for all $x\in\mathcal{X}$
$$
h(x)\coloneqq\begin{bmatrix}
[\Rcal J_0^{-1}\Rcal\tr\tilde{q}]_\times\Rcal-\Gamma(\widetilde{\Rcal})\\
-K\Rcal J_0^{-1}\Rcal\tr\vec(\widetilde{\Rcal}\Rcal\tr-\Rcal\widetilde{\Rcal}\tr)
\end{bmatrix},
$$
and for all $(R_1, q_1), (R_2, q_2)\in\R^{3\times 3}\times\R^{3}$,
$\langle (R_1, q_1), (R_2, q_2)\rangle\coloneqq\langle R_1,R_2\rangle_{\mathcal{F}}+q_1\tr q_2.$

By definition
$$
\nabla V(\widetilde{\Rcal}, \tilde{q})=\sum_{i=1}^{12}(DV(\widetilde{\Rcal}, \tilde{q})e_i)e_i\in\R^{3\times 3}\times \R^{3}
$$
where $\{e_i\}_{i=1}^{12}$ is any orthonormal basis of $\R^{3\times 3}\times \
\R^3$ (endowed with the standard inner product $\langle \cdot, \cdot\rangle_{\mathcal{F}}+\langle \cdot, \cdot\rangle_{\R^3}$). In particular, $\nabla V(\widetilde{\Rcal}, \tilde{q})$ is the Riesz representation of $DV(\widetilde{\Rcal}, \tilde{q})$, which, being unique,  does not depend on the basis $\{e_i\}_{i=1}^{12}$. To perform this computation, we select the canonical basis of $\R^{3\times 3}\times\
\R^3$, that is:
$$
\left(\{E_i\}_{i=1}^{9}\times\{0\}\right)\cup \left(\{0\}\times\{g_i\}_{i=1}^{3}\right)
$$ 
where $\{E_i\}$ and $\{g_i\}$ are, respectively, the canonical basis of $\R^{3\times 3}$ and $\R^{3}$. Following this approach, it easy to see that for 
all $(e, 0)\in \left(\{E_i\}_{i=1}^{9}\times\{0\}\right)$, one has
$$
(DV(\widetilde{\Rcal}, \tilde{q})(e, 0))(e, 0)=\langle \widetilde{\Rcal}, e\rangle_{\mathcal{F}}
$$
and that for all $(0, w)\in \left(\{0\}\times\{g_i\}_{i=1}^3\right)$
$$
(DV(\widetilde{\Rcal}, \tilde{q})(0,w))(0, w)=\tilde{q}\tr K^{-1} w
$$
Thus
$$
\nabla V(\widetilde{\Rcal}, \tilde{q})=(\widetilde{\Rcal}, K^{-1}\tilde{q})\in\R^{3\times 3}\times\R^3
$$
\subsection{Discussion on wrapped angular measurements.}
The set $\Sone$ is isomorphic to the set of equivalence classes $[-\pi, \pi]/\sim$, where $\sim$ is the ``gluing'' equivalence: $x\sim x^\prime$ if $x=x^\prime$ or $x=-\pi$ and $x^\prime=\pi$, or $x=\pi$ and $x^\prime=-\pi$. The isomorphism is given by
$$
\begin{aligned}
&([-\pi, \pi]/\sim)\ni [x]\mapsto f(x)=\begin{bmatrix}
\cos(x)&-\sin(x)\\
\sin(x)&\cos(x)
\end{bmatrix}\in\Sone\\
&\Sone\ni y\mapsto f^{-1}(y)=
\begin{cases}
\{\mathrm{atan2}(y_{21}, y_{11})\}&\text{if}\,\, y_{21}\neq 0\\
\{\pm\pi\}&\text{else}
\end{cases}
\end{aligned}
$$ 
where $[x]$ stands for the equivalence class of $x\in[-\pi, \pi]$ and $\mathrm{atan2}$ stands for the 2-argument arctangent function. Clearly $f$ is continuous, while continuity of $f^{-1}$ can be proven by observing that $f^{-1}$ sends closed sets in $\Sone$ into closed sets into $[-\pi, \pi]/\sim$.  
In this sense, wrapped measurements of the angular position can be thought as the image in $(-\pi, \pi]$ of points in $([-\pi, \pi]/\sim)$ through the map:
$$
g([x])=\begin{cases}
x&\text{if}\,\,[x]\cap\{\pm\pi\}=\emptyset\\
\pi&\text{else}
\end{cases}
$$     
It can be shown that $g$ is discontinuous and this is what leads to ``jumps'' in wrapped measurements.
\label{sec:disc}
 \end{document}